\DeclareMathOperator*{\argmin}{arg\,min}
\begin{document}
\title{Dynamic Pricing Algorithms for Online Set Cover}
%
%
\author{Max Bender\orcidID{0009-0009-4882-7801} \and
Aum Desai\orcidID{0009-0007-1867-1557} \and
Jialin He\orcidID{0009-0006-4044-7191} \and
Oliver Thompson\orcidID{0009-0009-4713-4480} \and
Pramithas Upreti}
\authorrunning{M. Bender et al.}
%
\institute{Colby College, Waterville ME, USA}
\maketitle              
\begin{abstract} 
We consider dynamic pricing algorithms as applied to the online set cover problem. In the dynamic pricing framework, we assume the standard client server model with the additional constraint that the server can only place prices over the resources they maintain, rather than authoritatively assign them. In response, incoming clients choose the resource which minimizes their disutility when taking into account these additional prices. Our main contributions are the categorization of online algorithms which can be mimicked via dynamic pricing algorithms and the identification of a strongly competitive deterministic algorithm with respect to the frequency parameter of the online set cover input. 

\keywords{Online Set Cover  \and Competitive Analysis \and Dynamic Pricing}
\end{abstract}

\section{Introduction}
In this paper, we'll consider the Online Set Cover problem within the dynamic pricing framework. The framework of dynamic pricing comes from the observation that it is not always in individual clients' best interest to represent themselves honestly in common online client-server problems. Dynamic Pricing addresses this concern by removing the interaction of the client and the server; in dynamic pricing problems, the server maintains prices over its controlled resources and lets the clients choose their desired resource. Typically, clients are assumed to act greedily and select the resource that minimizes the sum of cost due to prices and their own disutility of the resource. As these problems are typically studied in an online setting, the strength of algorithms for these problems are measured via competitive analysis, where we bound the cost of the solution produced by the algorithm / greedy nature of the clients against the solution of the optimal solution in hindsight. 

\subsection{Problem Statement}
In the Online Set Cover (OSC) problem, we are given a universe $X$ and sets $\mathcal{S}\subset\mathcal{P}(X)$, where each set $S$ has associated cost $c_S$. For any element $\eta\in X$ we'll denote the sets that cover $\eta$ by $\mathcal{S}_\eta = \{S\in\mathcal{S}\ |\ \eta\in S\}$. A sequence of elements in $X$ are then requested in an online fashion: for each arriving element, the algorithm must ensure that it is covered by a set in $\mathcal{S}$. At the time of the request of $\eta$, if a set in $\mathcal{S}_\eta$ has already been purchased, the algorithm need not do anything; otherwise, if no set in $\mathcal{S}_\eta$ has been purchased, then the algorithm must choose a set in $\mathcal{S}_\eta$ to irrevocably purchase.

In the Dynamic Pricing Set Cover (DPSC) problem, the input is still the same, but the algorithm instead chooses surcharges for each set prior to the arrival of an element; in particular, an algorithm can update prices in response to the purchase of a set, but not in response to the request of an element. We will denote the surcharge of a set $S$ as $\rho(\mathcal{S})$ and the combined cost of this surcharge and original cost $c_S$ as $\pi(S)$ for each set $S$ (so $\pi(S) = \rho(S) + c_S$). When an element $\eta$ is requested, if it is already covered by a previously purchased set nothing happens; otherwise, the client responsible for the request of $\eta$ purchases the set $$\argmin_{S\in\mathcal{S}_\eta} \pi(S).$$ 

In both problems, the objective is to minimize the sum of the costs of the purchased sets. As these are online problems, we will refer to the competitive ratio of an algorithm $A$ as $$\max_I\frac{A[I]}{\textsc{Opt[I}]},$$ where $I$ ranges over all possible inputs to the online set cover problem, $A[I]$ denotes the cost of the solution produced by $A$ on the input $I$, and $\textsc{Opt}[I]$ denotes the cost of the optimal solution on the input $I$.

\subsection{Related Work}
Offline Set Cover with uniform costs was one of Karp's 21 \textbf{NP}-Hard problems; as a result, numerous results that are of little relevance to this paper exist for this problem. As they are most closely related to our work, we'll discuss results as specified to the online set cover problem, and within the broad framework of Dynamic Pricing. 

\subsubsection{Online Set Cover}
Online Set Cover was first studied in \cite{alon2003}, which found a deterministic $O(\log m \log n)$ competitive primal-dual algorithm, where $m = |\mathcal{S}|$ is the number of sets and $n = |X|$ is the universe size. More recently, \cite{Gupta17} studied Set Cover in the fully dynamic model. In the fully dynamic model, elements can also leave the system (and hence no longer need to be covered) and the algorithm is permitted to a small amount of recourse; typically, this recourse is desired to be quickly computable. In this model, \cite{Gupta17} found a $O(\min(\log n_t, f_t))$ competitive algorithm with $O(1)$ recourse per update (amortized). 

\subsubsection{Dynamic Pricing}
Research into Dynamic Pricing mechanisms arguably began with \cite{cohen2015}, which studied the following classical online problems: 
\begin{itemize}
    \item \textbf{metrical task problems:} \cite{cohen2015} found a strongly competitive algorithm for metrical task systems in trees with competitive factor $O(m)$ (where $m$ is the number of distinct system states).
    \item \textbf{$k$ server problem:} \cite{cohen2015} found a strongly competitive algorithm for the $k$-server problem on line metrics achieving a competitive factor of $k$. This was later improved in \cite{cohen2019} by extending the same result to trees. 
    \item \textbf{minimal metric matching:} \cite{cohen2015} found a $O(\log \Delta)$-competitive algorithm for metrical matching on the line where $\Delta$ is the aspect ratio of the space. This can be improved to $O(\log n)$ when given a reasonably accurate estimate of the cost of the optimal solution ahead of time. This was later improved in \cite{arndt2023} by developing a $O(\log n)$ randomized competitive algorithm on the line without an estimate of the cost of the optimal solution. Additionally, \cite{bender2020} characterized precisely which algorithms for the online minimal metric matching problem for trees could be mimicked by dynamic pricing algorithms and developed a polylog competitive algorithm for metric search on trees. This was later followed up with a polylog competitive matching algorithm on spiders in \cite{bender2021}.
\end{itemize}
There have also been applications of the dynamic pricing framework to job scheduling problems. In particular, in \cite{feldman2017} they worked on job scheduling in a variety of contexts under the makespan minimization objective, the results of which are summarized in the following table: 
\begin{center}
    \begin{tabular}{@{\hspace{1em}}c@{\hspace{1em}}|@{\hspace{1em}}c@{\hspace{1em}}|@{\hspace{1em}}c@{\hspace{1em}}}
        Machine Model & Static Pricing & Dynamic Pricing\\\hline
        Identical & $O(1)$ & $O(1)$ \\
        Related & $\Theta(\log m)$ & $O(1)$\\
        Restricted & $\Theta(\log m)$ & $\Theta(\log m)$\\
        Unrelated & $\Theta(m)$ & $\Theta(m)$
    \end{tabular}
\end{center}
Of particular interest in this table is the lowerbound of $\Omega(m)$ for unrelated machines under dynamic pricing, as there exists a $O(\log m)$ competitive online algorithm for unrelated machines. To our knowledge, this is the first and only known problem which is asymptotically harder under the dynamic pricing framework. Additionally, in \cite{im2017} developed an immediate-dispatch $O(1)$-competitive algorithm for maximum flow on related machines.

Similar to the dynamic pricing framework, \cite{eden2018} considered a framework of prompt scheduling of jobs while minimizing sum of completion times. In doing so, their algorithm would create `menus' with dyanmic prices associated with intervals for clients to purchase for running their jobs. Their results are summarized in the following table: 
\begin{center}
    \begin{tabular}{@{\hspace{0.5em}}c@{\hspace{0.5em}}|@{\hspace{0.5em}}c@{\hspace{0.5em}}|@{\hspace{0.5em}}c@{\hspace{0.5em}}|@{\hspace{0.5em}}c@{\hspace{0.5em}}|@{\hspace{0.5em}}c}
        \makecell{Processing\\Time} & \makecell{Job\\Weight} & \makecell{Menu\\Entries} & \makecell{Upper Bound\\(Deterministic)} & \makecell{Lower Bound\\(Randomized)}\\\hline
        $p_j\in\mathbb{Z}^+$ & $w_j = 1$ & \makecell{intervals\\no prices} & $O(\log P_{\text{max}})$ & $\Omega(\log P_{\text{max}})$\\\hline
        $p_j = 1$ & $w_j\in\mathbb{Z}^+$ & \makecell{intervals\\with prices} & \makecell{$O(\log W_\text{max}\times(\log n$\\$+\log\log W_\text{max})$} & $\Omega(\log W_\text{max})$\\\hline
        $p_j\in\mathbb{Z}^+$ & $w_j\in\mathbb{Z}^+$ & \makecell{intervals\\with prices} & \makecell{$O((\log n+\log P_\text{max})$\\$\times\log B_\text{max})$} & \makecell{$\Omega(\max (\log B_\text{max},$\\$\log P_\text{max}))$}
    \end{tabular}
\end{center}
In the above table, $P_\text{max}$ refers to the maximal processing time, $W_\text{max}$ refers to the highest weighted job, and $B_\text{max}$ refers to an apriori upperbound on the cost of the optimal solution.

\subsection{Our Results}
Our primary result is the following theorem:
\begin{theorem}
    There is a $\Theta(f)$-competitive dynamic pricing algorithm for the online set cover problem and this is optimal for deterministic algorithms.
\end{theorem}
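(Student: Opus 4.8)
The plan is to prove the two halves separately: an $O(f)$-competitive dynamic pricing algorithm, and a matching $\Omega(f)$ lower bound against every deterministic dynamic pricing algorithm. For the upper bound I would start from the classical dual-fitting (primal--dual) online algorithm for set cover: maintain a dual variable $y_\eta$ for each requested element and, when an uncovered $\eta$ arrives, raise $y_\eta$ until some covering set $S\in\mathcal{S}_\eta$ becomes tight (i.e. $\sum_{\eta'\in S} y_{\eta'} = c_S$) and purchase that set. The standard argument then gives $\sum_{S\ \mathrm{bought}} c_S = \sum_{S\ \mathrm{bought}}\sum_{\eta\in S} y_\eta \le f\sum_\eta y_\eta \le f\cdot\textsc{Opt}$, where the middle inequality is the frequency bound (each $\eta$ lies in at most $f$ sets) and the last is weak LP duality.

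The key step is to realize this algorithm as a dynamic pricing scheme. I would price each set by its residual cost: set $\pi(S) = c_S - \sum_{\eta\in S} y_\eta + C$, equivalently $\rho(S) = C - \sum_{\eta\in S} y_\eta$, taking $C = \max_S c_S$ so that every surcharge is nonnegative. Since adding the global constant $C$ does not change the $\argmin$, for any arriving uncovered $\eta$ the greedy client buys $\argmin_{S\in\mathcal{S}_\eta}\pi(S) = \argmin_{S\in\mathcal{S}_\eta}(c_S-\sum_{\eta'\in S}y_{\eta'})$, which is exactly the minimum-slack set the primal--dual rule makes tight. Crucially, the dual update (and hence the induced price change) happens only \emph{after} the triggered purchase, so the scheme obeys the DPSC rule that prices may change only in response to a purchase; formally I would invoke the characterization of mimickable algorithms to certify that this minimum-slack selection is realizable by prices. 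The resulting dynamic pricing algorithm then inherits the $O(f)$ bound verbatim.

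For the lower bound I would exhibit, for each $f$, an instance of frequency $f$ together with an adaptive adversary forcing cost $\Omega(f)\cdot\textsc{Opt}$ against every deterministic pricing. The base gadget is the complete-graph star for $f=2$: elements are the edges of $K_n$ and sets are the vertices (unit cost, frequency $2$). The adversary names the currently most expensive vertex $z$ and repeatedly requests edges incident to $z$; the client is forced to buy a cheaper endpoint each time, so either the algorithm eventually lowers $\pi(S_z)$ and pays for one wasted endpoint plus $z$ (ratio $2$), or it never does and pays unboundedly while $\textsc{Opt}=\{S_z\}$. The essential point is that the adversary fixes the optimal set only after prices are committed, which defeats the naive ``make the good set cheap'' strategy available to an unrestricted online algorithm. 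I would then layer this gadget across $f$ geometric cost scales, so that draining the budget of a set at level $\ell$ forces a purchase at level $\ell-1$; each of the $f$ levels contributes one additively wasted set while $\textsc{Opt}$ remains a single expensive central set, yielding ratio $\Omega(f)$.

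The main obstacle is the lower bound, specifically proving that no price vector survives. Two features must be controlled at once: the coupling constraint that a single price per set is shared by all of its elements (so the algorithm cannot independently make every element's ``right'' choice), and the algorithm's freedom to re-price after each purchase --- which, as the upper bound makes clear, is precisely what lets the primal--dual mimic escape a two-level trap at ratio $2$. The heart of the argument is therefore to show that against any committed prices and any re-pricing, the adaptive adversary can always locate an uncovered element whose cheapest covering set is a fresh, wasteful set at the current level, while the designated central set still covers every element requested so far; and to verify that the geometric cost schedule makes these $f$ forced purchases sum to $\Omega(f)$ against $\textsc{Opt}=O(1)$. Calibrating the branching and the cost scales so that this holds for \emph{every} deterministic strategy, rather than merely for the primal--dual mimic, is the crux of the proof.
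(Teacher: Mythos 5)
Your upper bound is sound and is, at its core, the same algorithm the paper uses: the primal--dual-by-frequency algorithm of \cite{buchbinder2009}, whose $f$-competitiveness follows from exactly the dual-fitting argument you reproduce. Where you differ is in how the algorithm is turned into prices. The paper takes an indirect route: it first proves a general characterization (an OSC algorithm is priceable if and only if every assignment scheme it induces has an acyclic preference graph) and then verifies that the primal--dual algorithm is monotone, so that the generic \textsc{PathPrice} construction produces mimicking prices. You instead write the prices down explicitly, $\pi(S) = c_S - y_S + C_{\max}$, and observe that the client's $\argmin$ over these prices is exactly the minimum-slack set that the dual raise would make tight. This is a more direct (and arguably cleaner) realization, and it is correct: all slacks of sets containing $\eta$ shrink at the same unit rate as $y_\eta$ rises, so the first set to go tight is the pre-arrival $\argmin$ of $c_S - y_S$; ties occur only between sets of equal slack, any of which preserves the competitive analysis; and dual feasibility $y_S \le c_S \le C_{\max}$ keeps every surcharge nonnegative. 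The one caveat --- that updating $y_\eta$ after the purchase requires knowing which element arrived --- is shared by the paper's own solution, as its conclusion explicitly concedes.

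The genuine gap is the lower bound, and you have misdiagnosed where the difficulty lies. A deterministic pricing scheme induces a deterministic online algorithm (the greedy client's choice is a function of the published prices), so deterministic dynamic pricing algorithms are a \emph{subclass} of deterministic online algorithms, and the classical $\Omega(f)$ lower bound for deterministic online set cover applies verbatim --- no adversary tailored to prices is needed. This is exactly what the paper does (its Lemma 6, reproduced from \cite{alon2003}): the universe is the nonzero length-$k$ bit strings, set $S_i$ consists of the strings whose $i$-th bit is $1$ (uniform costs, frequency $k$), and the adversary repeatedly requests the string whose $1$s sit exactly in the positions of sets not yet purchased; each request is uncovered, forcing $k$ purchases, while the last purchased set alone covers every request, giving ratio $k=f$. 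Note also that the theorem claims optimality against \emph{all} deterministic algorithms, which your approach could not deliver even if completed, since your adversary plays only against pricing algorithms. Finally, your construction as sketched does not yield $\Omega(f)$: with geometrically decaying costs across the $f$ levels, the wasted purchases form a geometric series and total $O(1)\cdot\textsc{Opt}$, not $\Omega(f)\cdot\textsc{Opt}$; a linear gap needs $f$ wasted sets each of cost comparable to $\textsc{Opt}$, which is precisely what the uniform-cost bit-string instance provides. You flagged this calibration as ``the crux of the proof,'' and indeed it is the part that is missing --- but the fix is to discard the bespoke construction and simply invoke the classical online lower bound, which transfers for free to the pricing setting.
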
 In the above theorem, $f = \max_\eta|\mathcal{S}_\eta|$ is the \textbf{frequency} of the input.

We begin by motivating the work in section \ref{sec:Greedy} by analyzing the natural \textsc{Greedy} algorithm and observing that its competitive ratio is unbounded in the parameters of $n$, $m$, and $f$. In section \ref{sec:Monotonicty}, we characterize precisely which algorithms for the OSC problem can be exactly mimicked by dynamic pricing algorithms and develop an algorithm for computing pricing schemes that mimics priceable algorithms. Finally, in section \ref{sec:PDAlg} we review an algorithm from \cite{buchbinder2009} and show that it is monotone and therefore priceable. Additionally, we show that under the frequency parameter $f$ it is optimal with respect to the competitive ratio for deterministic algorithms.

\section{Performance of \textsc{Greedy}}\label{sec:Greedy}
We first show that $\textsc{Greedy}$ algorithm performs badly on the OSC problem. In particular, we consider the algorithm in which the cheapest set available for uncovered elements is chosen; equivalently, this is the solution produced by agents under a dynamic pricing scheme where the algorithm selects a uniformly 0 surcharge function.

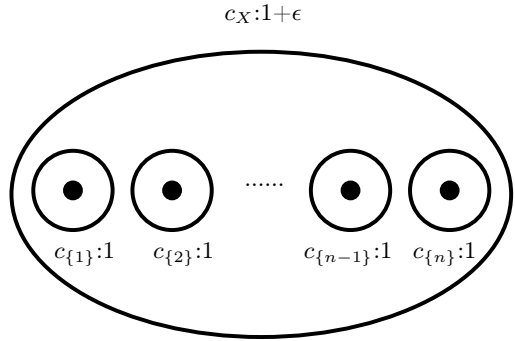
\begin{figure}[!h]
\centering
\tikzset{every picture/.style={line width=0.75pt}} 

\begin{tikzpicture}[x=0.75pt,y=0.75pt,yscale=-1,xscale=1]

\draw  [line width=1.5]  (205,130) .. controls (205,118.95) and (213.95,110) .. (225,110) .. controls (236.05,110) and (245,118.95) .. (245,130) .. controls (245,141.05) and (236.05,150) .. (225,150) .. controls (213.95,150) and (205,141.05) .. (205,130) -- cycle ;
\draw  [line width=1.5]  (194,132) .. controls (194,92.24) and (250.41,60) .. (320,60) .. controls (389.59,60) and (446,92.24) .. (446,132) .. controls (446,171.76) and (389.59,204) .. (320,204) .. controls (250.41,204) and (194,171.76) .. (194,132) -- cycle ;
\draw  [fill={rgb, 255:red, 0; green, 0; blue, 0 }  ,fill opacity=1 ] (220.5,130) .. controls (220.5,127.51) and (222.51,125.5) .. (225,125.5) .. controls (227.49,125.5) and (229.5,127.51) .. (229.5,130) .. controls (229.5,132.49) and (227.49,134.5) .. (225,134.5) .. controls (222.51,134.5) and (220.5,132.49) .. (220.5,130) -- cycle ;
\draw  [line width=1.5]  (255,130) .. controls (255,118.95) and (263.95,110) .. (275,110) .. controls (286.05,110) and (295,118.95) .. (295,130) .. controls (295,141.05) and (286.05,150) .. (275,150) .. controls (263.95,150) and (255,141.05) .. (255,130) -- cycle ;
\draw  [fill={rgb, 255:red, 0; green, 0; blue, 0 }  ,fill opacity=1 ] (270.5,130) .. controls (270.5,127.51) and (272.51,125.5) .. (275,125.5) .. controls (277.49,125.5) and (279.5,127.51) .. (279.5,130) .. controls (279.5,132.49) and (277.49,134.5) .. (275,134.5) .. controls (272.51,134.5) and (270.5,132.49) .. (270.5,130) -- cycle ;
\draw  [line width=1.5]  (345,130) .. controls (345,118.95) and (353.95,110) .. (365,110) .. controls (376.05,110) and (385,118.95) .. (385,130) .. controls (385,141.05) and (376.05,150) .. (365,150) .. controls (353.95,150) and (345,141.05) .. (345,130) -- cycle ;
\draw  [fill={rgb, 255:red, 0; green, 0; blue, 0 }  ,fill opacity=1 ] (360.5,130) .. controls (360.5,127.51) and (362.51,125.5) .. (365,125.5) .. controls (367.49,125.5) and (369.5,127.51) .. (369.5,130) .. controls (369.5,132.49) and (367.49,134.5) .. (365,134.5) .. controls (362.51,134.5) and (360.5,132.49) .. (360.5,130) -- cycle ;
\draw  [line width=1.5]  (395,130) .. controls (395,118.95) and (403.95,110) .. (415,110) .. controls (426.05,110) and (435,118.95) .. (435,130) .. controls (435,141.05) and (426.05,150) .. (415,150) .. controls (403.95,150) and (395,141.05) .. (395,130) -- cycle ;
\draw  [fill={rgb, 255:red, 0; green, 0; blue, 0 }  ,fill opacity=1 ] (410.5,130) .. controls (410.5,127.51) and (412.51,125.5) .. (415,125.5) .. controls (417.49,125.5) and (419.5,127.51) .. (419.5,130) .. controls (419.5,132.49) and (417.49,134.5) .. (415,134.5) .. controls (412.51,134.5) and (410.5,132.49) .. (410.5,130) -- cycle ;

\draw (310,125) node [anchor=north west][inner sep=0.75pt]   [align=left] {......};
\draw (300,35) node [anchor=north west][inner sep=0.75pt]   [align=left] {{\small $c_{X}$:1+$\epsilon$}};
\draw (214,155) node [anchor=north west][inner sep=0.75pt]   [align=left] {{\small $c_{\{1\}}$:1}};
\draw (264,155) node [anchor=north west][inner sep=0.75pt]   [align=left] {{\small $c_{\{2\}}$:1}};
\draw (340,155) node [anchor=north west][inner sep=0.75pt]   [align=left] {{\small $c_{\{n-1\}}$:1}};
\draw (395,155) node [anchor=north west][inner sep=0.75pt]   [align=left] {{\small $c_{\{n\}}$:1}};
\end{tikzpicture}
\caption{A hard instance for \textsc{Greedy}}
\end{figure}

\begin{lemma}
    The \textsc{Greedy} algorithm has an unbounded competitive ratio with respect to the frequency parameter.
\end{lemma}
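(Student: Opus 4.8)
The plan is to exhibit, for each $n$, a single hard instance (the one depicted in the figure above) on which $\textsc{Greedy}$ is forced to pay roughly $n$ times the optimum, while the frequency parameter remains fixed at $f = 2$. Since $n$ may be taken arbitrarily large, this shows that the competitive ratio of $\textsc{Greedy}$ cannot be bounded by any function of $f$ alone, which is precisely the claim.

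Concretely, I would take the universe $X = \{1, \ldots, n\}$ together with the set system consisting of the single ``full'' set $X$ with cost $c_X = 1 + \epsilon$ and the $n$ singletons $\{i\}$, each with cost $c_{\{i\}} = 1$. First I would record that every element $\eta$ is covered by exactly two sets, namely its own singleton and $X$, so that $\mathcal{S}_\eta = \{\{\eta\}, X\}$ and hence $f = \max_\eta |\mathcal{S}_\eta| = 2$, independent of $n$.

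Next I would analyze the behavior of $\textsc{Greedy}$ under the request sequence $1, 2, \ldots, n$. The key observation, which I would justify by a straightforward induction on the request index, is that when element $i$ arrives it is not yet covered by any previously purchased set (every earlier purchase is a singleton $\{j\}$ with $j < i$), so $\textsc{Greedy}$ must buy a set from $\mathcal{S}_i = \{\{i\}, X\}$; since $c_{\{i\}} = 1 < 1 + \epsilon = c_X$, it purchases the singleton $\{i\}$. Summing over all $n$ requests gives $\textsc{Greedy}[I] = n$. On the other hand, purchasing the single set $X$ covers every request at once, so $\textsc{Opt}[I] \le 1 + \epsilon$.

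Combining these bounds yields a competitive ratio of at least $\frac{n}{1 + \epsilon}$ on an instance with $f = 2$, which diverges as $n \to \infty$. There is no genuine obstacle in this argument; the only point requiring care is the strict inequality $c_{\{i\}} < c_X$, which is exactly why the full set is priced at $1 + \epsilon$ rather than at $1$. This strictness guarantees that $\textsc{Greedy}$ never breaks a tie in favor of $X$ and is thereby trapped into buying every singleton, so that no tie-breaking convention can rescue its performance.
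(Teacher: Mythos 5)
Your proposal is correct and is essentially identical to the paper's own proof: it uses the same hard instance ($n$ singletons of cost $1$ plus the full set of cost $1+\epsilon$), the same analysis showing \textsc{Greedy} pays $n$ while \textsc{Opt} pays $1+\epsilon$, and the same observation that $f=2$ throughout. The only difference is that you spell out the induction and tie-breaking details that the paper leaves implicit.
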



\begin{proof}
    Consider elements $X = \{1,2,3,...,n-1,n\}$ that each belong to a singleton set of cost 1 while the entire collection $X$ has cost $1 + \epsilon$. The greedy algorithm would assign each element to its singleton, giving a total cost of $n$, while the optimal solution is $1 + \epsilon$. If we increase $n$, the frequency of the instance remains at 2 while the cost of \textsc{Greedy} increases to $n$ and \textsc{Opt} remains 1 + $\epsilon$. Thus \textsc{Greedy} has an unbounded competitive ratio with respect to frequency. 
\end{proof}

Notably, this also induces a linear lowerbound for the \textsc{Greedy} algorithm, showing we'd need to be at least reasonably clever to obtain a $O(\log n\log  m)$ pricing scheme to match the best known algorithm for the OSC under these parameters.

\section{Equivalence of Monotonicity and Priceability}\label{sec:Monotonicty}
Knowing the poor performance of the greedy approach, pricing schemes could be introduced to improve upon this performance. Since there are many algorithms for the OSC problem already, it is a natural step to try and determine which of them are priceable. Specifically, for which algorithms does there exist a pricing scheme that will replicate the behavior of the algorithm when each element chooses the least expensive set that it is covered by? We call this strategy \textbf{mimickry}. In this section, we introduce a categorization of priceability for the OSC problem and introduce an algorithm $\textsc{PathPrice}$ that creates a pricing scheme for any priceable OSC algorithm. This pricing scheme will \textit{mimic} the original OSC algorithm; that is, each incoming client will choose the set to cover itself that the algorithm would have assigned to it. 

In many of the coming definitions we will make reference to assignment \textit{schemes} and pricing \textit{schemes}: in such scenarios, we are talking about a fixed moment in time directly prior to the arrival of a request for a particular element. An assignment scheme is induced by an algorithm for the OSC: it is the function $A: X\rightarrow\mathcal{S}$ where $A(\eta)$ is the set the algorithm would choose to cover $\eta$ should it be the next element requested. We will use the notation $\eta\rightarrow_\mathcal{A} T$ to denote $A(\eta) = T$.

\begin{definition}[Preference Graph]
Given an assignment scheme $\mathcal{A}$, the associated preference graph is the digraph $G = (V, E)$ defined by $V = \mathcal{S}$, where $S_{v}$ is the set corresponding to vertex $v \in G$ and $E = \{(S, T)\ |\ S,T \in \mathcal{S}\ |\ \exists \eta\in S\cap T: \eta\rightarrow_\mathcal{A} T\}$
\end{definition}

Preference graphs are simply the graphical interpretation of how choices are made according to a given assignment scheme. As we shall see, this choices are the defining aspects of an assignment scheme which allow us to determine prices. 

\begin{definition}[Monotonicity]
An algorithm for the Online Set Cover problem is \textbf{monotone} if every assignment scheme $\mathcal{A}$ it determines induces an acyclic preference graph.
\end{definition}

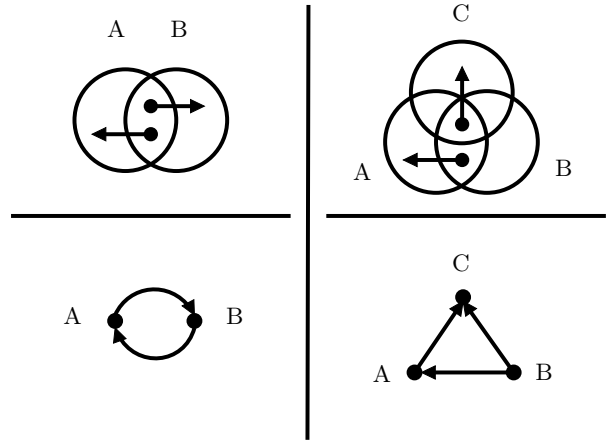
\begin{figure}[!h]
\centering
\tikzset{every picture/.style={line width=0.75pt}} 
\begin{tikzpicture}[x=0.75pt,y=0.75pt,yscale=-1,xscale=1]

\draw [line width=1.5]    (172,143) -- (313,143) ;
\draw [line width=1.5]    (322,36.5) -- (321.5,256) ;
\draw  [line width=1.5]  (204,94.67) .. controls (204,80.49) and (215.49,69) .. (229.67,69) .. controls (243.84,69) and (255.33,80.49) .. (255.33,94.67) .. controls (255.33,108.84) and (243.84,120.33) .. (229.67,120.33) .. controls (215.49,120.33) and (204,108.84) .. (204,94.67) -- cycle ;
\draw  [line width=1.5]  (229.67,94.67) .. controls (229.67,80.49) and (241.16,69) .. (255.33,69) .. controls (269.51,69) and (281,80.49) .. (281,94.67) .. controls (281,108.84) and (269.51,120.33) .. (255.33,120.33) .. controls (241.16,120.33) and (229.67,108.84) .. (229.67,94.67) -- cycle ;
\draw  [fill={rgb, 255:red, 0; green, 0; blue, 0 }  ,fill opacity=1 ] (245.6,87.59) .. controls (245.6,85.88) and (244.21,84.49) .. (242.5,84.49) .. controls (240.79,84.49) and (239.4,85.88) .. (239.4,87.59) .. controls (239.4,89.3) and (240.79,90.68) .. (242.5,90.68) .. controls (244.21,90.68) and (245.6,89.3) .. (245.6,87.59) -- cycle ;
\draw  [fill={rgb, 255:red, 0; green, 0; blue, 0 }  ,fill opacity=1 ] (245.6,101.75) .. controls (245.6,100.04) and (244.21,98.65) .. (242.5,98.65) .. controls (240.79,98.65) and (239.4,100.04) .. (239.4,101.75) .. controls (239.4,103.46) and (240.79,104.84) .. (242.5,104.84) .. controls (244.21,104.84) and (245.6,103.46) .. (245.6,101.75) -- cycle ;
\draw [line width=1.5]    (242.5,87.59) -- (265.49,87.59) ;
\draw [shift={(269.49,87.59)}, rotate = 180] [fill={rgb, 255:red, 0; green, 0; blue, 0 }  ][line width=0.08]  [draw opacity=0] (6.97,-3.35) -- (0,0) -- (6.97,3.35) -- cycle    ;
\draw [line width=1.5]    (242.5,101.75) -- (216.85,101.75) ;
\draw [shift={(212.85,101.75)}, rotate = 360] [fill={rgb, 255:red, 0; green, 0; blue, 0 }  ][line width=0.08]  [draw opacity=0] (6.97,-3.35) -- (0,0) -- (6.97,3.35) -- cycle    ;
\draw  [fill={rgb, 255:red, 0; green, 0; blue, 0 }  ,fill opacity=1 ] (228,196) .. controls (228,194.07) and (226.43,192.5) .. (224.5,192.5) .. controls (222.57,192.5) and (221,194.07) .. (221,196) .. controls (221,197.93) and (222.57,199.5) .. (224.5,199.5) .. controls (226.43,199.5) and (228,197.93) .. (228,196) -- cycle ;
\draw  [fill={rgb, 255:red, 0; green, 0; blue, 0 }  ,fill opacity=1 ] (268,196) .. controls (268,194.07) and (266.43,192.5) .. (264.5,192.5) .. controls (262.57,192.5) and (261,194.07) .. (261,196) .. controls (261,197.93) and (262.57,199.5) .. (264.5,199.5) .. controls (266.43,199.5) and (268,197.93) .. (268,196) -- cycle ;
\draw [line width=1.5]    (224.5,193) .. controls (231.51,177.57) and (252.08,175.26) .. (262.48,189.74) ;
\draw [shift={(264.5,193)}, rotate = 242.18] [fill={rgb, 255:red, 0; green, 0; blue, 0 }  ][line width=0.08]  [draw opacity=0] (6.97,-3.35) -- (0,0) -- (6.97,3.35) -- cycle    ;
\draw [line width=1.5]    (264.5,199.5) .. controls (259.33,218.3) and (235.16,220.31) .. (226.05,203.04) ;
\draw [shift={(224.5,199.5)}, rotate = 70.35] [fill={rgb, 255:red, 0; green, 0; blue, 0 }  ][line width=0.08]  [draw opacity=0] (6.97,-3.35) -- (0,0) -- (6.97,3.35) -- cycle    ;
\draw  [line width=1.5]  (360.67,105.67) .. controls (360.67,91.49) and (372.16,80) .. (386.33,80) .. controls (400.51,80) and (412,91.49) .. (412,105.67) .. controls (412,119.84) and (400.51,131.33) .. (386.33,131.33) .. controls (372.16,131.33) and (360.67,119.84) .. (360.67,105.67) -- cycle ;
\draw  [line width=1.5]  (386.33,105.67) .. controls (386.33,91.49) and (397.82,80) .. (412,80) .. controls (426.18,80) and (437.67,91.49) .. (437.67,105.67) .. controls (437.67,119.84) and (426.18,131.33) .. (412,131.33) .. controls (397.82,131.33) and (386.33,119.84) .. (386.33,105.67) -- cycle ;
\draw  [line width=1.5]  (373.67,80.67) .. controls (373.67,66.49) and (385.16,55) .. (399.33,55) .. controls (413.51,55) and (425,66.49) .. (425,80.67) .. controls (425,94.84) and (413.51,106.33) .. (399.33,106.33) .. controls (385.16,106.33) and (373.67,94.84) .. (373.67,80.67) -- cycle ;
\draw  [fill={rgb, 255:red, 0; green, 0; blue, 0 }  ,fill opacity=1 ] (402.6,114.75) .. controls (402.6,113.04) and (401.21,111.65) .. (399.5,111.65) .. controls (397.79,111.65) and (396.4,113.04) .. (396.4,114.75) .. controls (396.4,116.46) and (397.79,117.84) .. (399.5,117.84) .. controls (401.21,117.84) and (402.6,116.46) .. (402.6,114.75) -- cycle ;
\draw  [fill={rgb, 255:red, 0; green, 0; blue, 0 }  ,fill opacity=1 ] (402.6,96.75) .. controls (402.6,95.04) and (401.21,93.65) .. (399.5,93.65) .. controls (397.79,93.65) and (396.4,95.04) .. (396.4,96.75) .. controls (396.4,98.46) and (397.79,99.84) .. (399.5,99.84) .. controls (401.21,99.84) and (402.6,98.46) .. (402.6,96.75) -- cycle ;
\draw [line width=1.5]    (399.5,114.75) -- (373.85,114.75) ;
\draw [shift={(369.85,114.75)}, rotate = 360] [fill={rgb, 255:red, 0; green, 0; blue, 0 }  ][line width=0.08]  [draw opacity=0] (6.97,-3.35) -- (0,0) -- (6.97,3.35) -- cycle    ;
\draw [line width=1.5]    (399.5,96.75) -- (399.5,71.67) ;
\draw [shift={(399.5,67.67)}, rotate = 90] [fill={rgb, 255:red, 0; green, 0; blue, 0 }  ][line width=0.08]  [draw opacity=0] (6.97,-3.35) -- (0,0) -- (6.97,3.35) -- cycle    ;
\draw  [fill={rgb, 255:red, 0; green, 0; blue, 0 }  ,fill opacity=1 ] (379,222) .. controls (379,220.07) and (377.43,218.5) .. (375.5,218.5) .. controls (373.57,218.5) and (372,220.07) .. (372,222) .. controls (372,223.93) and (373.57,225.5) .. (375.5,225.5) .. controls (377.43,225.5) and (379,223.93) .. (379,222) -- cycle ;
\draw  [fill={rgb, 255:red, 0; green, 0; blue, 0 }  ,fill opacity=1 ] (429,222) .. controls (429,220.07) and (427.43,218.5) .. (425.5,218.5) .. controls (423.57,218.5) and (422,220.07) .. (422,222) .. controls (422,223.93) and (423.57,225.5) .. (425.5,225.5) .. controls (427.43,225.5) and (429,223.93) .. (429,222) -- cycle ;
\draw  [fill={rgb, 255:red, 0; green, 0; blue, 0 }  ,fill opacity=1 ] (403.5,184) .. controls (403.5,182.07) and (401.93,180.5) .. (400,180.5) .. controls (398.07,180.5) and (396.5,182.07) .. (396.5,184) .. controls (396.5,185.93) and (398.07,187.5) .. (400,187.5) .. controls (401.93,187.5) and (403.5,185.93) .. (403.5,184) -- cycle ;
\draw [line width=1.5]    (375.5,222) -- (397.75,189.31) ;
\draw [shift={(400,186)}, rotate = 124.24] [fill={rgb, 255:red, 0; green, 0; blue, 0 }  ][line width=0.08]  [draw opacity=0] (6.97,-3.35) -- (0,0) -- (6.97,3.35) -- cycle    ;
\draw [line width=1.5]    (425.5,222) -- (402.31,189.26) ;
\draw [shift={(400,186)}, rotate = 54.69] [fill={rgb, 255:red, 0; green, 0; blue, 0 }  ][line width=0.08]  [draw opacity=0] (6.97,-3.35) -- (0,0) -- (6.97,3.35) -- cycle    ;
\draw [line width=1.5]    (425.5,222) -- (383,222) ;
\draw [shift={(379,222)}, rotate = 360] [fill={rgb, 255:red, 0; green, 0; blue, 0 }  ][line width=0.08]  [draw opacity=0] (6.97,-3.35) -- (0,0) -- (6.97,3.35) -- cycle    ;
\draw [line width=1.5]    (331,143) -- (472,143) ;

\draw (219,43) node [anchor=north west][inner sep=0.75pt]   [align=left] {A};
\draw (251,43) node [anchor=north west][inner sep=0.75pt]   [align=left] {B};
\draw (197,188) node [anchor=north west][inner sep=0.75pt]   [align=left] {A};
\draw (279,188) node [anchor=north west][inner sep=0.75pt]   [align=left] {B};
\draw (343,115) node [anchor=north west][inner sep=0.75pt]   [align=left] {A};
\draw (445,113) node [anchor=north west][inner sep=0.75pt]   [align=left] {B};
\draw (393,35) node [anchor=north west][inner sep=0.75pt]   [align=left] {C};
\draw (353,217) node [anchor=north west][inner sep=0.75pt]   [align=left] {A};
\draw (435,216) node [anchor=north west][inner sep=0.75pt]   [align=left] {B};
\draw (393,161) node [anchor=north west][inner sep=0.75pt]   [align=left] {C};
\end{tikzpicture}
\caption{A non-monotone and a monotone instance. Top left: An example assignment scheme where arrows indicate the set that is assigned to cover the element if it is introduced next. Bottom left: the resulting cyclic preference graph from the assignment scheme in the top left. This is an unpriceable edge relation. Right side: a valid assignment scheme and its resulting preference graph.}
\end{figure}

\begin{lemma}
    Every pricing scheme induces a monotone assignment scheme.
\end{lemma}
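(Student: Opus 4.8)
The plan is to exhibit the prices themselves as a potential function that strictly decreases along every edge of the preference graph, so that a directed cycle is impossible. First I would unpack what an edge records. By definition an edge $(S,T)$ is present exactly when some $\eta\in S\cap T$ satisfies $\eta\rightarrow_{\mathcal{A}} T$. Since here the assignment scheme $\mathcal{A}$ is induced by a fixed pricing scheme, $\eta\rightarrow_{\mathcal{A}} T$ means $T=\argmin_{S'\in\mathcal{S}_\eta}\pi(S')$. Because $\eta\in S\cap T$, the set $S$ is itself one of the candidates in this minimization, so the minimizer $T$ must satisfy $\pi(T)\le\pi(S)$. Hence every edge $(S,T)$ of the preference graph (with $S\neq T$) witnesses the weak inequality $\pi(T)\le\pi(S)$, and along any directed path the combined cost $\pi$ is nonincreasing.

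The main obstacle is the case of ties: weak inequalities alone would permit a cycle in which all sets share a common combined cost. To rule this out I would use the fact that $\argmin_{S'\in\mathcal{S}_\eta}\pi(S')$ is only well-defined once a fixed tie-breaking rule is specified, and I would take that rule to be a fixed strict total order $\prec$ on $\mathcal{S}$ (for instance by index). Defining the lexicographic potential $\Phi(S)=(\pi(S),\mathrm{idx}(S))$, I claim that $\eta\rightarrow_{\mathcal{A}}T$ with $\eta\in S\cap T$ and $S\neq T$ forces $\Phi(T)<_{\mathrm{lex}}\Phi(S)$: either $\pi(T)<\pi(S)$ outright, or $\pi(T)=\pi(S)$ and the tie-breaking selected $T$ over $S$, whence $\mathrm{idx}(T)<\mathrm{idx}(S)$. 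In either case $\Phi$ drops strictly across the edge, promoting the earlier weak inequality to the strict order needed to preclude equality around a cycle.

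Finally I would close by contradiction. Suppose the preference graph contained a directed cycle $S_1\to S_2\to\cdots\to S_k\to S_1$ with $k\ge 2$ and distinct consecutive vertices. Applying the strict drop of $\Phi$ across each edge and chaining the inequalities yields $\Phi(S_1)<_{\mathrm{lex}}\Phi(S_1)$, which is impossible because $<_{\mathrm{lex}}$ is a strict (irreflexive) order. Therefore the preference graph induced by the pricing scheme admits no directed cycle, and by the definition of monotonicity the induced assignment scheme is monotone. I expect the price-comparison observation to be essentially immediate once the edge definition is unwound; the only genuinely delicate point is the tie handling, which is why I would make the consistent tie-breaking rule explicit and fold it into the lexicographic potential rather than relying on strict price inequalities.
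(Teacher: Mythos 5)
Your proposal is correct and follows essentially the same approach as the paper: both arguments suppose a directed cycle $S_1 \to S_2 \to \cdots \to S_k \to S_1$ in the preference graph, translate each edge into a price inequality via the client's $\argmin$ behavior, and chain these around the cycle to reach the contradiction $\pi(S_1) < \pi(S_1)$. Your treatment is in fact slightly more careful than the paper's, which asserts the strict inequalities directly (and with the direction reversed, though this does not affect the contradiction) without addressing ties; your lexicographic potential $\Phi(S) = (\pi(S), \mathrm{idx}(S))$ closes that small gap by making the tie-breaking rule explicit.
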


\begin{proof}
Let $\pi$ be a pricing scheme and let $\mathcal{A}$ be the associated assignment scheme with preference graph $G$. Suppose that  $\mathcal{A}$ is not monotone. Then, there must exist a cycle $C$ in $G$. The vertices in the cycle $C$ correspond to the sequence of sets $(S_1, ..., S_k)$ such that for each $1\leq i\leq k$ there is an $\eta_i\in S_i\cap S_{i+1}$\footnote{Letting $S_{n+1} = S_1$} such that $\eta_i\rightarrow_\mathcal{A} S_{i+1}$. This would mean that $\pi(S_i) < \pi(S_{i+1})$ for all $1\leq i\leq k$. However, this is a contradiction, because we have $\pi(S_k)<\pi(S_1) < \pi(S_2) < ... < \pi(S_{k-1}) < \pi(S_k)$ which is an unpriceable set of inequalities. Thus the pricing scheme must be monotone.
\end{proof}


A monotone preference graph G is a directed acyclic graph where an edge from vertex $v_k$ to $v_j$ implies that $\pi(S_k) > \pi(S_j)$ in the final pricing scheme in order to replicate the behavior of an assignment scheme $\mathcal{A}$. So the algorithm that assigns valid prices to each set must satisfy the above inequality for each edge in the preference graph. The obvious way to do this is price the vertices according to the longest path to which they belong first, which is the idea behind $\textsc{PathPrice}$. The algorithm $\textsc{PathPrice}$ returns a surcharge function $\rho(S): \mathcal{S} \rightarrow \mathbb{R}^+$ such that $\pi(S_k) = \rho(S_k) + c_{S_k} > \rho(S_j) + c_{S_j} = \pi(S_j)$ for all $k \rightarrow j \in G$. In the process of building the $\rho$, we will create a length function $l(v): v(G) \longrightarrow (NaN \cup \mathbb{R}^+$), so the $l(v)$ is either not yet set $(NaN)$ or equal to the longest path's number of vertices stemming from the vertex $v$. 


\begin{definition}[\textsc{NextPath}]\\
    \textbf{Input}: A digraph $G$ and a length function $l(v)$\\
    \textbf{Output}:The associated vertex sequence of the  longest path in a graph defined as: ($\#$ of vertices in the path + $l(v)$), where path length is at least 1 and $v$ is the last vertex in the path. The vertex sequence is returned as a list of vertices where $v[i]$ has an edge to $v[i+1]$. If $l(v) = NaN$, $\textsc{NextPath}$ will evaluate it as 0 for the purpose of calculating path length. 
\end{definition}

We will use \textsc{NextPath} as a subroutine to iteratively process the graph according to the longest paths, constructing the surcharge function along the way. 

\begin{definition}[\textsc{PathPrice}]\end{definition}
\begin{lstlisting}[mathescape=true, numbers=left]
Given a monotone assignment scheme: 
    Create a Preference Graph $G$.  

Initialize $l(v)$ to $NaN$ for all vertices in G
Let $C_{max}$ be the maximum cost of $S \in \mathcal{S}$.

While the number of edges in $G$ is not zero:
    v = $\textsc{NextPath}(G,l)$
    k = v.length
    For $i = k - 1, ..., 1$:
        If $l(v_k)$ is $NaN$,       
            set $l(v_k)$ = 0.
        If $l(v_i)$ is not $NaN$, continue.
        Else, set $l(v_i)= l(v_{i+1})+1$.
    Delete edges $v_i,v_{i+1}$ for all vertices of v in G
    
For each vertex $u$ in $G$:
    $\rho(S_u) \leftarrow l(u) + (C_{max} - c_{S_u})$
Output: $\rho: \mathcal{S}\rightarrow\mathbb{R^+}$.
\end{lstlisting}

The following lemmas show that a valid pricing scheme exists for any monotone OSC algorithm by creating a valid $\rho$ function for every assignment scheme $\mathcal{A}$  using $\textsc{PathPrice}$. 

\begin{figure}
    \centering

\tikzset{every picture/.style={line width=0.75pt}} 

\begin{tikzpicture}[x=0.75pt,y=0.75pt,yscale=-1,xscale=1]

\draw  [draw opacity=0][fill={rgb, 255:red, 176; green, 213; blue, 223 }  ,fill opacity=1 ] (557.82,147.5) .. controls (559.71,147.5) and (561.25,149.04) .. (561.25,150.93) -- (561.25,159.07) .. controls (561.25,160.96) and (559.71,162.5) .. (557.82,162.5) -- (458.18,162.5) .. controls (456.29,162.5) and (454.75,160.96) .. (454.75,159.07) -- (454.75,150.93) .. controls (454.75,149.04) and (456.29,147.5) .. (458.18,147.5) -- cycle ;
\draw  [draw opacity=0][fill={rgb, 255:red, 176; green, 213; blue, 223 }  ,fill opacity=1 ] (323.45,100) .. controls (325.41,100) and (327,101.59) .. (327,103.55) -- (327,111.95) .. controls (327,113.91) and (325.41,115.5) .. (323.45,115.5) -- (273.45,115.5) .. controls (271.49,115.5) and (269.9,113.91) .. (269.9,111.95) -- (269.9,103.55) .. controls (269.9,101.59) and (271.49,100) .. (273.45,100) -- cycle ;
\draw  [draw opacity=0][fill={rgb, 255:red, 176; green, 213; blue, 223 }  ,fill opacity=1 ] (266.4,103.52) .. controls (266.4,101.58) and (267.98,100) .. (269.92,100) -- (280.48,100) .. controls (282.42,100) and (284,101.58) .. (284,103.52) -- (284,204.48) .. controls (284,206.42) and (282.42,208) .. (280.48,208) -- (269.92,208) .. controls (267.98,208) and (266.4,206.42) .. (266.4,204.48) -- cycle ;
\draw  [draw opacity=0][fill={rgb, 255:red, 176; green, 213; blue, 223 }  ,fill opacity=1 ] (370.47,194.13) .. controls (372.37,194.13) and (373.9,195.67) .. (373.9,197.57) -- (373.9,205.7) .. controls (373.9,207.6) and (372.37,209.13) .. (370.47,209.13) -- (270.84,209.13) .. controls (268.94,209.13) and (267.4,207.6) .. (267.4,205.7) -- (267.4,197.57) .. controls (267.4,195.67) and (268.94,194.13) .. (270.84,194.13) -- cycle ;
\draw [line width=1.5]    (218,85) -- (218,230.5) ;
\draw [line width=1.5]    (138,107) -- (99,107) ;
\draw [shift={(95,107)}, rotate = 360] [fill={rgb, 255:red, 0; green, 0; blue, 0 }  ][line width=0.08]  [draw opacity=0] (6.97,-3.35) -- (0,0) -- (6.97,3.35) -- cycle    ;
\draw [line width=1.5]    (95,111) -- (95,147) ;
\draw [shift={(95,151)}, rotate = 270] [fill={rgb, 255:red, 0; green, 0; blue, 0 }  ][line width=0.08]  [draw opacity=0] (6.97,-3.35) -- (0,0) -- (6.97,3.35) -- cycle    ;
\draw [line width=1.5]    (95,159) -- (95,195) ;
\draw [shift={(95,199)}, rotate = 270] [fill={rgb, 255:red, 0; green, 0; blue, 0 }  ][line width=0.08]  [draw opacity=0] (6.97,-3.35) -- (0,0) -- (6.97,3.35) -- cycle    ;
\draw [line width=1.5]    (95,201) -- (131,201) ;
\draw [shift={(135,201)}, rotate = 180] [fill={rgb, 255:red, 0; green, 0; blue, 0 }  ][line width=0.08]  [draw opacity=0] (6.97,-3.35) -- (0,0) -- (6.97,3.35) -- cycle    ;
\draw [line width=1.5]    (141,201) -- (177,201) ;
\draw [shift={(181,201)}, rotate = 180] [fill={rgb, 255:red, 0; green, 0; blue, 0 }  ][line width=0.08]  [draw opacity=0] (6.97,-3.35) -- (0,0) -- (6.97,3.35) -- cycle    ;
\draw [line width=1.5]    (94,154) -- (130,154) ;
\draw [shift={(134,154)}, rotate = 180] [fill={rgb, 255:red, 0; green, 0; blue, 0 }  ][line width=0.08]  [draw opacity=0] (6.97,-3.35) -- (0,0) -- (6.97,3.35) -- cycle    ;
\draw [line width=1.5]    (140,154) -- (176,154) ;
\draw [shift={(180,154)}, rotate = 180] [fill={rgb, 255:red, 0; green, 0; blue, 0 }  ][line width=0.08]  [draw opacity=0] (6.97,-3.35) -- (0,0) -- (6.97,3.35) -- cycle    ;
\draw  [fill={rgb, 255:red, 0; green, 0; blue, 0 }  ,fill opacity=1 ] (98.1,107.1) .. controls (98.1,105.39) and (96.71,104) .. (95,104) .. controls (93.29,104) and (91.9,105.39) .. (91.9,107.1) .. controls (91.9,108.81) and (93.29,110.2) .. (95,110.2) .. controls (96.71,110.2) and (98.1,108.81) .. (98.1,107.1) -- cycle ;
\draw  [fill={rgb, 255:red, 0; green, 0; blue, 0 }  ,fill opacity=1 ] (98.1,154.1) .. controls (98.1,152.39) and (96.71,151) .. (95,151) .. controls (93.29,151) and (91.9,152.39) .. (91.9,154.1) .. controls (91.9,155.81) and (93.29,157.2) .. (95,157.2) .. controls (96.71,157.2) and (98.1,155.81) .. (98.1,154.1) -- cycle ;
\draw  [fill={rgb, 255:red, 0; green, 0; blue, 0 }  ,fill opacity=1 ] (98.1,201) .. controls (98.1,199.29) and (96.71,197.9) .. (95,197.9) .. controls (93.29,197.9) and (91.9,199.29) .. (91.9,201) .. controls (91.9,202.71) and (93.29,204.1) .. (95,204.1) .. controls (96.71,204.1) and (98.1,202.71) .. (98.1,201) -- cycle ;
\draw  [fill={rgb, 255:red, 0; green, 0; blue, 0 }  ,fill opacity=1 ] (140,154) .. controls (140,152.29) and (138.61,150.9) .. (136.9,150.9) .. controls (135.19,150.9) and (133.8,152.29) .. (133.8,154) .. controls (133.8,155.71) and (135.19,157.1) .. (136.9,157.1) .. controls (138.61,157.1) and (140,155.71) .. (140,154) -- cycle ;
\draw  [fill={rgb, 255:red, 0; green, 0; blue, 0 }  ,fill opacity=1 ] (141,201) .. controls (141,199.29) and (139.61,197.9) .. (137.9,197.9) .. controls (136.19,197.9) and (134.8,199.29) .. (134.8,201) .. controls (134.8,202.71) and (136.19,204.1) .. (137.9,204.1) .. controls (139.61,204.1) and (141,202.71) .. (141,201) -- cycle ;
\draw [color={rgb, 255:red, 0; green, 0; blue, 0 }  ,draw opacity=1 ][line width=1.5]    (316,108) -- (279,108) ;
\draw [shift={(275,108)}, rotate = 360] [fill={rgb, 255:red, 0; green, 0; blue, 0 }  ,fill opacity=1 ][line width=0.08]  [draw opacity=0] (6.97,-3.35) -- (0,0) -- (6.97,3.35) -- cycle    ;
\draw [color={rgb, 255:red, 0; green, 0; blue, 0 }  ,draw opacity=1 ][line width=1.5]    (275,112) -- (275,148) ;
\draw [shift={(275,152)}, rotate = 270] [fill={rgb, 255:red, 0; green, 0; blue, 0 }  ,fill opacity=1 ][line width=0.08]  [draw opacity=0] (6.97,-3.35) -- (0,0) -- (6.97,3.35) -- cycle    ;
\draw [color={rgb, 255:red, 0; green, 0; blue, 0 }  ,draw opacity=1 ][line width=1.5]    (275,160) -- (275,196) ;
\draw [shift={(275,200)}, rotate = 270] [fill={rgb, 255:red, 0; green, 0; blue, 0 }  ,fill opacity=1 ][line width=0.08]  [draw opacity=0] (6.97,-3.35) -- (0,0) -- (6.97,3.35) -- cycle    ;
\draw [color={rgb, 255:red, 0; green, 0; blue, 0 }  ,draw opacity=1 ][line width=1.5]    (275,202) -- (311,202) ;
\draw [shift={(315,202)}, rotate = 180] [fill={rgb, 255:red, 0; green, 0; blue, 0 }  ,fill opacity=1 ][line width=0.08]  [draw opacity=0] (6.97,-3.35) -- (0,0) -- (6.97,3.35) -- cycle    ;
\draw [color={rgb, 255:red, 0; green, 0; blue, 0 }  ,draw opacity=1 ][fill={rgb, 255:red, 186; green, 204; blue, 217 }  ,fill opacity=1 ][line width=1.5]    (321,202) -- (357,202) ;
\draw [shift={(361,202)}, rotate = 180] [fill={rgb, 255:red, 0; green, 0; blue, 0 }  ,fill opacity=1 ][line width=0.08]  [draw opacity=0] (6.97,-3.35) -- (0,0) -- (6.97,3.35) -- cycle    ;
\draw [line width=1.5]    (274,155) -- (310,155) ;
\draw [shift={(314,155)}, rotate = 180] [fill={rgb, 255:red, 0; green, 0; blue, 0 }  ][line width=0.08]  [draw opacity=0] (6.97,-3.35) -- (0,0) -- (6.97,3.35) -- cycle    ;
\draw [line width=1.5]    (320,155) -- (356,155) ;
\draw [shift={(360,155)}, rotate = 180] [fill={rgb, 255:red, 0; green, 0; blue, 0 }  ][line width=0.08]  [draw opacity=0] (6.97,-3.35) -- (0,0) -- (6.97,3.35) -- cycle    ;
\draw  [color={rgb, 255:red, 0; green, 0; blue, 0 }  ,draw opacity=1 ][fill={rgb, 255:red, 0; green, 0; blue, 0 }  ,fill opacity=1 ] (278.1,108.1) .. controls (278.1,106.39) and (276.71,105) .. (275,105) .. controls (273.29,105) and (271.9,106.39) .. (271.9,108.1) .. controls (271.9,109.81) and (273.29,111.2) .. (275,111.2) .. controls (276.71,111.2) and (278.1,109.81) .. (278.1,108.1) -- cycle ;
\draw  [color={rgb, 255:red, 0; green, 0; blue, 0 }  ,draw opacity=1 ][fill={rgb, 255:red, 0; green, 0; blue, 0 }  ,fill opacity=1 ] (278.1,155.1) .. controls (278.1,153.39) and (276.71,152) .. (275,152) .. controls (273.29,152) and (271.9,153.39) .. (271.9,155.1) .. controls (271.9,156.81) and (273.29,158.2) .. (275,158.2) .. controls (276.71,158.2) and (278.1,156.81) .. (278.1,155.1) -- cycle ;
\draw  [color={rgb, 255:red, 0; green, 0; blue, 0 }  ,draw opacity=1 ][fill={rgb, 255:red, 0; green, 0; blue, 0 }  ,fill opacity=1 ] (278.1,202) .. controls (278.1,200.29) and (276.71,198.9) .. (275,198.9) .. controls (273.29,198.9) and (271.9,200.29) .. (271.9,202) .. controls (271.9,203.71) and (273.29,205.1) .. (275,205.1) .. controls (276.71,205.1) and (278.1,203.71) .. (278.1,202) -- cycle ;
\draw  [fill={rgb, 255:red, 0; green, 0; blue, 0 }  ,fill opacity=1 ] (320,155) .. controls (320,153.29) and (318.61,151.9) .. (316.9,151.9) .. controls (315.19,151.9) and (313.8,153.29) .. (313.8,155) .. controls (313.8,156.71) and (315.19,158.1) .. (316.9,158.1) .. controls (318.61,158.1) and (320,156.71) .. (320,155) -- cycle ;
\draw  [color={rgb, 255:red, 0; green, 0; blue, 0 }  ,draw opacity=1 ][fill={rgb, 255:red, 0; green, 0; blue, 0 }  ,fill opacity=1 ] (321,202) .. controls (321,200.29) and (319.61,198.9) .. (317.9,198.9) .. controls (316.19,198.9) and (314.8,200.29) .. (314.8,202) .. controls (314.8,203.71) and (316.19,205.1) .. (317.9,205.1) .. controls (319.61,205.1) and (321,203.71) .. (321,202) -- cycle ;
\draw  [fill={rgb, 255:red, 0; green, 0; blue, 0 }  ,fill opacity=1 ] (186.2,154) .. controls (186.2,152.29) and (184.81,150.9) .. (183.1,150.9) .. controls (181.39,150.9) and (180,152.29) .. (180,154) .. controls (180,155.71) and (181.39,157.1) .. (183.1,157.1) .. controls (184.81,157.1) and (186.2,155.71) .. (186.2,154) -- cycle ;
\draw  [fill={rgb, 255:red, 0; green, 0; blue, 0 }  ,fill opacity=1 ] (187.2,201) .. controls (187.2,199.29) and (185.81,197.9) .. (184.1,197.9) .. controls (182.39,197.9) and (181,199.29) .. (181,201) .. controls (181,202.71) and (182.39,204.1) .. (184.1,204.1) .. controls (185.81,204.1) and (187.2,202.71) .. (187.2,201) -- cycle ;
\draw  [fill={rgb, 255:red, 0; green, 0; blue, 0 }  ,fill opacity=1 ] (366.2,155) .. controls (366.2,153.29) and (364.81,151.9) .. (363.1,151.9) .. controls (361.39,151.9) and (360,153.29) .. (360,155) .. controls (360,156.71) and (361.39,158.1) .. (363.1,158.1) .. controls (364.81,158.1) and (366.2,156.71) .. (366.2,155) -- cycle ;
\draw  [color={rgb, 255:red, 0; green, 0; blue, 0 }  ,draw opacity=1 ][fill={rgb, 255:red, 0; green, 0; blue, 0 }  ,fill opacity=1 ] (367.2,202) .. controls (367.2,200.29) and (365.81,198.9) .. (364.1,198.9) .. controls (362.39,198.9) and (361,200.29) .. (361,202) .. controls (361,203.71) and (362.39,205.1) .. (364.1,205.1) .. controls (365.81,205.1) and (367.2,203.71) .. (367.2,202) -- cycle ;
\draw  [fill={rgb, 255:red, 0; green, 0; blue, 0 }  ,fill opacity=1 ] (140.2,107) .. controls (140.2,105.29) and (138.81,103.9) .. (137.1,103.9) .. controls (135.39,103.9) and (134,105.29) .. (134,107) .. controls (134,108.71) and (135.39,110.1) .. (137.1,110.1) .. controls (138.81,110.1) and (140.2,108.71) .. (140.2,107) -- cycle ;
\draw  [color={rgb, 255:red, 0; green, 0; blue, 0 }  ,draw opacity=1 ][fill={rgb, 255:red, 0; green, 0; blue, 0 }  ,fill opacity=1 ] (319.1,108) .. controls (319.1,106.29) and (317.71,104.9) .. (316,104.9) .. controls (314.29,104.9) and (312.9,106.29) .. (312.9,108) .. controls (312.9,109.71) and (314.29,111.1) .. (316,111.1) .. controls (317.71,111.1) and (319.1,109.71) .. (319.1,108) -- cycle ;
\draw [line width=1.5]    (462,155) -- (498,155) ;
\draw [shift={(502,155)}, rotate = 180] [fill={rgb, 255:red, 0; green, 0; blue, 0 }  ][line width=0.08]  [draw opacity=0] (6.97,-3.35) -- (0,0) -- (6.97,3.35) -- cycle    ;
\draw [line width=1.5]    (508,155) -- (544,155) ;
\draw [shift={(548,155)}, rotate = 180] [fill={rgb, 255:red, 0; green, 0; blue, 0 }  ][line width=0.08]  [draw opacity=0] (6.97,-3.35) -- (0,0) -- (6.97,3.35) -- cycle    ;
\draw  [color={rgb, 255:red, 0; green, 0; blue, 0 }  ,draw opacity=1 ][fill={rgb, 255:red, 0; green, 0; blue, 0 }  ,fill opacity=1 ] (466.1,108.1) .. controls (466.1,106.39) and (464.71,105) .. (463,105) .. controls (461.29,105) and (459.9,106.39) .. (459.9,108.1) .. controls (459.9,109.81) and (461.29,111.2) .. (463,111.2) .. controls (464.71,111.2) and (466.1,109.81) .. (466.1,108.1) -- cycle ;
\draw  [color={rgb, 255:red, 0; green, 0; blue, 0 }  ,draw opacity=1 ][fill={rgb, 255:red, 0; green, 0; blue, 0 }  ,fill opacity=1 ] (466.1,155.1) .. controls (466.1,153.39) and (464.71,152) .. (463,152) .. controls (461.29,152) and (459.9,153.39) .. (459.9,155.1) .. controls (459.9,156.81) and (461.29,158.2) .. (463,158.2) .. controls (464.71,158.2) and (466.1,156.81) .. (466.1,155.1) -- cycle ;
\draw  [color={rgb, 255:red, 0; green, 0; blue, 0 }  ,draw opacity=1 ][fill={rgb, 255:red, 0; green, 0; blue, 0 }  ,fill opacity=1 ] (466.1,202) .. controls (466.1,200.29) and (464.71,198.9) .. (463,198.9) .. controls (461.29,198.9) and (459.9,200.29) .. (459.9,202) .. controls (459.9,203.71) and (461.29,205.1) .. (463,205.1) .. controls (464.71,205.1) and (466.1,203.71) .. (466.1,202) -- cycle ;
\draw  [fill={rgb, 255:red, 0; green, 0; blue, 0 }  ,fill opacity=1 ] (508,155) .. controls (508,153.29) and (506.61,151.9) .. (504.9,151.9) .. controls (503.19,151.9) and (501.8,153.29) .. (501.8,155) .. controls (501.8,156.71) and (503.19,158.1) .. (504.9,158.1) .. controls (506.61,158.1) and (508,156.71) .. (508,155) -- cycle ;
\draw  [color={rgb, 255:red, 0; green, 0; blue, 0 }  ,draw opacity=1 ][fill={rgb, 255:red, 0; green, 0; blue, 0 }  ,fill opacity=1 ] (509,202) .. controls (509,200.29) and (507.61,198.9) .. (505.9,198.9) .. controls (504.19,198.9) and (502.8,200.29) .. (502.8,202) .. controls (502.8,203.71) and (504.19,205.1) .. (505.9,205.1) .. controls (507.61,205.1) and (509,203.71) .. (509,202) -- cycle ;
\draw  [fill={rgb, 255:red, 0; green, 0; blue, 0 }  ,fill opacity=1 ] (554.2,155) .. controls (554.2,153.29) and (552.81,151.9) .. (551.1,151.9) .. controls (549.39,151.9) and (548,153.29) .. (548,155) .. controls (548,156.71) and (549.39,158.1) .. (551.1,158.1) .. controls (552.81,158.1) and (554.2,156.71) .. (554.2,155) -- cycle ;
\draw  [color={rgb, 255:red, 0; green, 0; blue, 0 }  ,draw opacity=1 ][fill={rgb, 255:red, 0; green, 0; blue, 0 }  ,fill opacity=1 ] (555.2,202) .. controls (555.2,200.29) and (553.81,198.9) .. (552.1,198.9) .. controls (550.39,198.9) and (549,200.29) .. (549,202) .. controls (549,203.71) and (550.39,205.1) .. (552.1,205.1) .. controls (553.81,205.1) and (555.2,203.71) .. (555.2,202) -- cycle ;
\draw  [color={rgb, 255:red, 0; green, 0; blue, 0 }  ,draw opacity=1 ][fill={rgb, 255:red, 0; green, 0; blue, 0 }  ,fill opacity=1 ] (507.1,108) .. controls (507.1,106.29) and (505.71,104.9) .. (504,104.9) .. controls (502.29,104.9) and (500.9,106.29) .. (500.9,108) .. controls (500.9,109.71) and (502.29,111.1) .. (504,111.1) .. controls (505.71,111.1) and (507.1,109.71) .. (507.1,108) -- cycle ;
\draw [line width=1.5]    (401,85.25) -- (401,230.75) ;

\draw (360,213) node [anchor=north west][inner sep=0.75pt]   [align=left] {0};
\draw (313,213) node [anchor=north west][inner sep=0.75pt]   [align=left] {1};
\draw (269.7,213) node [anchor=north west][inner sep=0.75pt]   [align=left] {2};
\draw (248,150) node [anchor=north west][inner sep=0.75pt]   [align=left] {3};
\draw (248,102) node [anchor=north west][inner sep=0.75pt]   [align=left] {4};
\draw (334.45,102) node [anchor=north west][inner sep=0.75pt]   [align=left] {5};
\draw (548,213) node [anchor=north west][inner sep=0.75pt]   [align=left] {0};
\draw (501,213) node [anchor=north west][inner sep=0.75pt]   [align=left] {1};
\draw (457.7,213) node [anchor=north west][inner sep=0.75pt]   [align=left] {2};
\draw (436,150) node [anchor=north west][inner sep=0.75pt]   [align=left] {3};
\draw (436,102) node [anchor=north west][inner sep=0.75pt]   [align=left] {4};
\draw (523,102) node [anchor=north west][inner sep=0.75pt]   [align=left] {5};

\end{tikzpicture}

    \caption{Left: An instance of a preference graph. Middle: $\textsc{NextPath}$ highlighted in blue. Right: The enumeration of paths with the edges removed.}
    \label{fig:enter-label}
\end{figure}
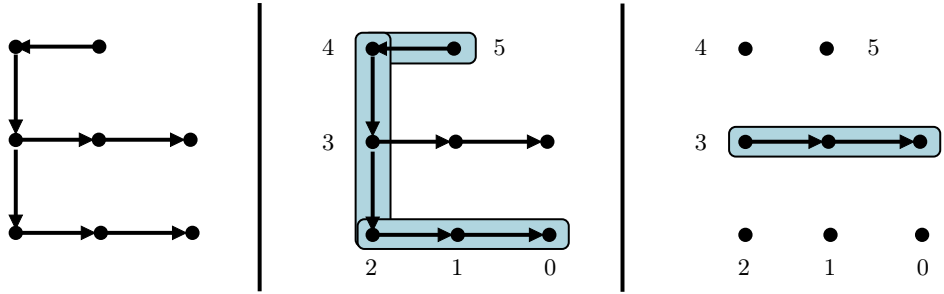

\begin{lemma}
    Given an algorithm's monotone assignment as input, the algorithm \textsc{PathPrice} assigns l(v) equal to the longest path stemming from each vertex $v \in G$. 
\end{lemma}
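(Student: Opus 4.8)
The plan is to prove the two-sided bound $l(v)=L(v)$, where I write $L(v)$ for the length (number of edges) of a longest directed path leaving $v$; this is well defined precisely because monotonicity makes $G$ acyclic, and it is exactly the quantity that the assignments $l(v_k)=0$ and $l(v_i)=l(v_{i+1})+1$ are designed to track. Throughout I would maintain one structural invariant of the while loop: whenever an edge is deleted, both of its endpoints have already received an $l$-value, since every vertex of the processed path is assigned in the preceding \texttt{for} loop. Its contrapositive is the real workhorse: a vertex still carrying $NaN$ retains all of its original out-edges in the current graph $G_t$. I would then prove $l(v)\le L(v)$ and $l(v)\ge L(v)$ separately, the first being routine and the second carrying the argument.

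For soundness ($l(v)\le L(v)$) I would induct on the order in which values are assigned. A vertex set to $0$ satisfies the bound trivially; when $v_i$ is set to $l(v_{i+1})+1$, the pair $(v_i,v_{i+1})$ is a genuine edge of $G$, so $L(v_i)\ge 1+L(v_{i+1})\ge 1+l(v_{i+1})=l(v_i)$ by the inductive hypothesis applied to the earlier-assigned $v_{i+1}$.

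Completeness ($l(v)\ge L(v)$) is the crux, and I would prove it by induction on the iterations, showing that every vertex freshly assigned while processing the path $P=(v_1,\dots,v_k)$ returned by \textsc{NextPath} receives its correct value. Here $P$ maximizes its value $(\#\text{vertices})+l(\text{endpoint})$ over all paths in $G_t$, and two consequences of this maximality are used: the endpoint $v_k$, if freshly set to $0$, cannot have a surviving out-edge (else $P$ extends), so it is a true sink with $L(v_k)=0$; and the suffix of $P$ from any $v_i$ is itself a maximum-value path out of $v_i$ in $G_t$, whence value$(P)=i+l(v_i)$. To bound $l(v_i)$ from below I would exhibit a concrete path of value $i+L(v_i)$ inside $G_t$: take a true longest path $v_i=u_0\to u_1\to\cdots\to u_{L(v_i)}$ in $G$ and walk along it to the first vertex $u_{s^\ast}$ that already holds an $l$-value. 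Since $u_0,\dots,u_{s^\ast-1}$ still carry $NaN$, the invariant guarantees their out-edges survive, so this initial segment lives in $G_t$; prepending the prefix $(v_1,\dots,v_{i-1})$ of $P$ (whose edges are in $G_t$) and using that $l(u_{s^\ast})=L(u_{s^\ast})=L(v_i)-s^\ast$ is already correct — by the outer induction if $u_{s^\ast}$ was set in an earlier iteration, or by the inner induction down the \texttt{for} loop if it was set earlier during the current one — the constructed path has value exactly $i+L(v_i)$. Maximality of $P$ then forces $i+l(v_i)=\text{value}(P)\ge i+L(v_i)$, i.e.\ $l(v_i)\ge L(v_i)$; with soundness this gives equality.

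The remaining bookkeeping is light: acyclicity keeps every constructed path simple (the prefix vertices are ancestors of $v_i$ while the $u_s$ are descendants of $v_i$), and because the loop runs until $G$ is edgeless while each deletion assigns the vertices of its path, every vertex lying on at least one edge is eventually set; an isolated vertex keeps $NaN$, consistent with $L=0$. I expect the completeness step to be the main obstacle, and within it the delicate point is handling out-neighbors of $v_i$ whose longest-path values have not yet been computed: the resolution is exactly the invariant that an unset vertex retains all of its edges, which lets me rebuild a genuine longest path from $v_i$ inside the reduced graph $G_t$ rather than appealing to values that do not yet exist.
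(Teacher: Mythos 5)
Your proof is correct and follows the same skeleton as the paper's: an outer induction over iterations of the while loop, an inner induction along the path returned by \textsc{NextPath}, and an appeal to that path's maximality to pin down each newly assigned value. The genuine difference is rigor at the one delicate point. The paper argues as if \textsc{NextPath} returns a value-maximal path of the original preference graph $G'$ (e.g.\ ``there are no edges leading out [of the last vertex] or $P_k$ would not be a maximal path''), but \textsc{NextPath} only sees the current graph, from which the edges of previously processed paths have been deleted, while ``longest path stemming from $v$'' refers to $G'$; these two graphs must be reconciled. Your invariant --- every deleted edge has both endpoints already assigned, so a vertex still marked $NaN$ keeps all of its out-edges --- is precisely that bridge, and your truncation construction (follow a true longest path out of $v_i$ until the first already-assigned vertex $u_{s^\ast}$, whose value is correct by the outer or inner inductive hypothesis, then prepend the prefix of $P$) exhibits inside the reduced graph the comparison path whose existence the paper implicitly takes for granted. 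The soundness/completeness split $l(v)\le L(v)$ and $l(v)\ge L(v)$ is likewise cleaner bookkeeping than the paper's single ``validity'' invariant. One small blemish: your identity $\mathrm{value}(P)=i+l(v_i)$ deserves a word of justification when the suffix of $P$ contains vertices assigned in earlier iterations (their inductively correct values must be checked to be consistent with the suffix), but only the inequality $i+l(v_i)\ge\mathrm{value}(P)$ is actually needed, and it follows from the inductive correctness of $l(v_{i+1})$ together with the fact that the suffix of $P$ is a path of $G'$. In short: same route as the paper, but your write-up closes a real gap that the paper's prose glosses over.
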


\begin{proof}

Let $G'$ refer to the preference graph before any edges have been removed by \textsc{PathPrice}. For a given vertex $v$, define $l(v)$ as valid if \textsc{PathPrice} assigns $l(v)$ equal to the length of the longest path stemming from the vertex $v$. 

For each iteration of the while loop on line 7, the following invariant holds: for each vertex v, $l(v)=NaN$, or it is a valid value. Before the loop executes, no vertex has its value set, so the statement is vacuously true. 

By way of induction, for the $k^{th}$ iteration, assume that the invariant has held true up to this point. We will show that it holds true after the $k^{th}$ iteration. $\textsc{NextPath}$ returns the longest path $P_k$ in $G'$ where at least 1 vertex does not have a value and path length is defined as ((\# of vertices $> 1$) + $l(v_k)$). We will induct on the unset vertices (as defined in line 14) in $P_k$, showing that the length assigned to each vertex is equal to the length of the longest path extending from it. 

Last vertex in the path: If this vertex has not been set then it is initialized to zero. There are no edges leading out from it or $P_k$ would not be a maximal path, so the valid length of the last vertex is 0.

Assume that the $j^{th}$ vertex's value has not been set, and all previous $l(v_i)$, where $i$ is from $k$ to $j+1$, set in the path are equal to the length of the longest paths starting from that vertex $v_i$. Let $l(v_{j+1})$ be the value of the $(j+1)^{th}$ vertex in $P_k$. If there is another path leading from the $j^{th}$ vertex whose ( $\#$ of vertices in path + $l(v_k)$) is greater than $l(v_{j+1})$, then $P_k$ would not have been a maximal path, which is a contradiction. Thus $l(v_{j+1}) + 1$ is a valid length for the $j^{th}$ vertex. All other values in the path that have previously been set are covered by the inductive hypothesis, thus every vertex in the path has a valid length.

Thus, for every vertex $v \in G'$ that the \textsc{PathPrice} sets values for, $l(v)$ is the length of the longest path that stems from that vertex.
\end{proof}

Next, we will show that the function $\rho$ output by \textsc{PathPrice} creates a pricing scheme that mimics the given assignment schemes. 

\begin{lemma}
    A greedy agent responding to the prices set by $\textsc{PathPrice}$ on a monotone assignment scheme $A$ chooses precisely the set which would have been chosen for them by $A$.
\end{lemma}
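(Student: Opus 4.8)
The plan is to show that the combined price $\pi$ produced by $\textsc{PathPrice}$ strictly decreases along every edge of the preference graph, and then to read the lemma off directly from the definition of that graph. The first step is a routine substitution: plugging the output rule $\rho(S_u) = l(u) + (C_{max} - c_{S_u})$ into $\pi(S_u) = \rho(S_u) + c_{S_u}$ gives
\[
\pi(S_u) = l(u) + C_{max},
\]
so the original set costs cancel and the combined price of a set depends only on its longest-path length $l(u)$, up to the global additive constant $C_{max}$. (For any vertex with no outgoing edges we read $l$ as $0$, consistent with the convention stated for $\textsc{NextPath}$; isolated vertices play no role below.)

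The key step is to argue that $l$ strictly decreases along edges: for every edge $(S,T)\in G$ we have $l(S) > l(T)$, and therefore $\pi(S) > \pi(T)$. This follows from the previous lemma, which guarantees that $\textsc{PathPrice}$ sets $l(v)$ equal to the length of the longest path stemming from $v$. Given an edge $S \to T$, I prepend it to any longest path out of $T$ to obtain a path out of $S$ with one additional vertex, so $l(S) \ge l(T) + 1$, and the strict inequality on $\pi$ follows from the price formula.

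With this in hand the lemma is immediate. Fix a requested element $\eta$ and let $T = A(\eta)$ be the set $\mathcal{A}$ assigns it, i.e.\ $\eta \to_\mathcal{A} T$. If $\mathcal{S}_\eta = \{T\}$ the agent has no choice and trivially selects $T$. Otherwise, for every competing set $S \in \mathcal{S}_\eta$ with $S \ne T$ we have $\eta \in S \cap T$ and $\eta \to_\mathcal{A} T$, so by the definition of the preference graph $(S,T)\in E$; the key step then gives $\pi(S) > \pi(T)$. Hence $T$ is the unique minimizer of $\pi$ over $\mathcal{S}_\eta$, and a greedy agent choosing $\argmin_{S\in\mathcal{S}_\eta}\pi(S)$ selects exactly $T = A(\eta)$.

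There is little genuine obstacle once the price formula is simplified; the only point requiring care is matching the orientation of edges (an edge $S \to T$ encodes that some shared element prefers $T$) with the direction of the resulting price inequality ($\pi(S) > \pi(T)$), so that the set the algorithm favours is indeed the cheaper one for the agent. The reliance on the previous lemma's longest-path characterization of $l$ is what guarantees the strict decrease, and hence that ties never cause the agent to deviate from $\mathcal{A}$.
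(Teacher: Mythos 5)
Your proposal is correct and follows essentially the same route as the paper's proof: simplify $\pi(S_u) = l(u) + C_{max}$ so costs cancel, use the preceding lemma's longest-path characterization of $l$ to get the strict inequality $l(S) > l(T)$ along every edge (you prove this directly by prepending the edge to a longest path, where the paper phrases the identical idea as a contradiction), and then observe that every competing set for a requested element has an edge into the assigned set, making it the unique price minimizer. No gaps; the handling of the single-set case and the edge-orientation bookkeeping match the paper as well.
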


\begin{proof}
    We will show that after $\textsc{PathPrice}$ has run, for every edge in the preference graph $i \rightarrow j$, the price of $S_i$ is greater than $S_j$. Since the edges correspond to the assignment of elements in the intersection of the two sets by a given assignment, having $\pi(S_i) > \pi(S_j)$ ensures that elements introduced in $S_i \cap S_j$ would select $S_j$ over $S_i$, and by doing so replicate the assignment $\mathcal{A}$.

    Since the property defined in Lemma 3 holds after the execution of $\textsc{PathPrice}$, it is the case that for every edge $u \rightarrow v \in G$, $l(u) > l(v)$. Assume for the sake of contradiction there exists an edge $u \rightarrow v \in G$ where $l(u) \leq l(v)$. Let $P_v$ be the length of the longest path that starts at $v$. The length of the path $P'$, defined as the path:\[
    u + P_v
    \] is greater than $l(u)$, because $l(u) \leq l(v)$ and the length of $P'$ is $l(v) + 1$. Thus $P'$ is a path of length greater than $l(u)$ beginning at vertex $u$, which is a contradiction, since lemma 3 tells us that $l(u)$ is the exact length of the longest path that starts from vertex $u$. Thus we must have $l(u) > l(v)$ for all $u \rightarrow v \in G$. 
    
    Because we have $l(u) > l(v)$ for all $u \rightarrow v \in G$, we know that $\pi(S_u) > \pi(S_v)$, where $S_i$ is the set associated with vertex $i\in G$. This is true because \[
    \pi(S_u) = \rho(S_u) + c_{S_u}
    \]and expanding the definition of $\rho(S_u)$ gives us \[
    \pi(S_u) = l(u) + C_{max} - c_{S_u} + c_{S_u}
    \]and since $l(u) > l(v)$, \[
    l(u) + C_{max} - c_{S_u} + c_{S_u} >l(v) + C_{max} - c_{S_v} + c_{S_v} = \pi(S_v)
    \] Thus we have $\pi(S_u) > \pi(S_v)$ for all $u \rightarrow v \in G$. 
    
    Consider $\eta$ that has yet to arrive and is covered by some sets $\{S_1, S_2,..., S_n\}$ and WLOG, $\mathcal{A}$ assigns $\eta$ to $S_1$. Because $\eta$ is assigned to $S_1$ over all other sets, the preference graph from $\mathcal{A}$ will have an edge from each set in $\{ S_2,..., S_n\}$ to $S_1$. After $\textsc{PathPrice}$ has run, $\pi(S_1)$ will be less than the price of every set with an edge going to it, and if $\eta$ arrives next it will choose $S_1$, replicating $\mathcal{A}$. 
    
    Thus with the pricing scheme $\pi$, any $\eta$ covered by more than one set will select the set defined by $\mathcal{A}(\eta)$. Any element covered by a single set can only make one choice so the price defined by $\pi$ is irrelevant to its choice. So in either case, the pricing scheme $\pi$ will perfectly replicate the behavior of any monotone assignment $\mathcal{A}$.
\end{proof}

The above lemmas combine to categorize precisely which algorithms are priceable:

\begin{theorem}
    An algorithm is priceable if and only if it is monotone.
\end{theorem}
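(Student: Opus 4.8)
The plan is to assemble the theorem directly from the two directions already isolated in the preceding lemmas; the only genuine work is to lift those per-snapshot statements to a statement about an algorithm running over an entire online request sequence.

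For the forward direction (priceable $\Rightarrow$ monotone), I would proceed as follows. Suppose $A$ is priceable, so there is a dynamic pricing strategy under which greedy agents reproduce exactly the purchases of $A$ on every input. Fix any reachable moment (any prefix of purchases $A$ can produce) and let $\mathcal{A}$ be the assignment scheme $A$ induces there. Because the strategy mimics $A$, the price vector $\pi$ active at that moment reproduces $\mathcal{A}$: every as-yet-unrequested element would be routed by a greedy agent exactly as $\mathcal{A}$ routes it, so the preference graph of $\mathcal{A}$ coincides with the graph induced by $\pi$. Lemma 2 then applies verbatim, giving that this graph is acyclic. Since the moment was arbitrary, every assignment scheme determined by $A$ is acyclic, i.e. $A$ is monotone.

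For the reverse direction (monotone $\Rightarrow$ priceable), I would construct the required dynamic pricing strategy explicitly from \textsc{PathPrice} and verify it by induction on the request sequence. Maintain the invariant that after processing the first $t$ requests, the priced run and $A$ have purchased exactly the same sets and are therefore in the same state. At step $t+1$ the common state determines a single assignment scheme $\mathcal{A}$; monotonicity guarantees its preference graph is acyclic, so \textsc{PathPrice} is well defined on it, and by Lemma 4 a greedy agent responding to the \textsc{PathPrice} prices selects precisely $\mathcal{A}(\eta_{t+1}) = A(\eta_{t+1})$. Recomputing prices after each purchase is permitted in the DPSC model (prices may change in response to purchases, just not in response to requests), so the strategy is legal, the invariant is preserved, and the two runs coincide on the whole input. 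Hence $A$ is priceable, and combining the two implications yields the equivalence.

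I expect no deep obstacle: Lemmas 2 and 4 already supply all the combinatorial content, so the statement is essentially their conjunction. The one point that needs care is the composition step in the reverse direction, namely confirming that mimicking each snapshot in isolation actually produces global agreement with $A$. This is exactly what the state-synchronization induction handles, using that \textsc{PathPrice} is recomputed afresh at each moment from the current—and, by the inductive hypothesis, identical—assignment scheme.
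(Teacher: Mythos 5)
Your proof is correct and follows essentially the same route as the paper's: the forward direction is exactly Lemma 2 applied to each reachable assignment scheme, and the reverse direction is \textsc{PathPrice} plus Lemma 4. Your explicit state-synchronization induction over the request sequence (with the observation that reprice-after-purchase is legal in the DPSC model) is a detail the paper leaves implicit, but it is a refinement of the same argument rather than a different approach.
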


\begin{proof}
We know that an algorithm is priceable if and only if there exists a pricing scheme for it. Lemma 1.2 states that every pricing scheme induces a monotone assignment algorithm. Thus, it directly follows that if an algorithm is priceable, it is monotone.

From Lemma 1.3, we know that for any monotone assignment algorithm, \textsc{PathPrice} can generate a pricing scheme that replicates the assignments of the algorithm. Hence, if an algorithm is monotone, it is priceable.

Thus, an algorithm is priceable if and only if it is monotone.
\end{proof}

\section{A strongly competitive monotone algorithm}\label{sec:PDAlg}
Since \textsc{Greedy} performs badly on OSC, we will observe that a primal-dual algorithm from \cite{buchbinder2009} is a \textit{monotone} OSC algorithm; conveniently, it is also strongly competitive in the context of deterministic algorithms with respect to the frequency parameter $f$. Therefore, by applying the transformation \textsc{PathPrice} to obtain a pricing scheme, we have a strongly competitive dynamic pricing algorithm for the DPSC problem.

\begin{definition}[Primal-Dual by Frequency (reproduced from \cite{buchbinder2009})]
\end{definition}

\begin{lstlisting}[mathescape=true, numbers=left]
Maintain a variable $y_\eta$ for each element $\eta$. 
Let $y_S$ denote  $\sum_{\eta\in S}y_\eta$. 
Pick an ordering of the sets $S_1, S_2, ..., S_m \in \mathcal{S}$.

In response to an arriving uncovered element $\eta$: 
    Until there is a set $S\in \mathcal{S}_\eta$ such that $y_S = c_S$, 
        increase $y_\eta$ continuously. 
    Pick the smallest indexed set in $\mathcal{S}_\eta$ such that $y_S = c_S$.
\end{lstlisting}

As shown in \cite{buchbinder2009}, this algorithm happens to be $f$-competitive. We shall now show it is monotone and can therefore be mimicked by a dynamic pricing algorithm. 

\begin{lemma}
    The above algorithm is monotone and therefore priceable.
\end{lemma}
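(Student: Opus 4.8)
The plan is to exhibit, at any fixed snapshot of the algorithm, an explicit strictly-decreasing potential on the sets so that every edge of the induced preference graph points from a higher-potential set to a lower-potential one; acyclicity of every preference graph is then immediate, and monotonicity together with the earlier equivalence theorem (priceable iff monotone) yields priceability, with $\textsc{PathPrice}$ supplying the concrete pricing scheme.

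Concretely, fix the moment just before some element is requested, let $y_S = \sum_{\eta\in S} y_\eta$ denote the current dual mass of each set $S$ (with $y_\eta = 0$ for any element not yet arrived), and define the key $\Phi(S) = (c_S - y_S,\ \mathrm{idx}(S))$, ordered lexicographically with the slack $c_S - y_S$ primary and the fixed index of $S$ secondary. Because indices are distinct, $\Phi$ is a strict total order on $\mathcal{S}$, and crucially it is a single fixed function of the snapshot, not of the particular element under consideration.

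First I would show that for an uncovered element $\eta$, the set the algorithm assigns to it is exactly $\argmin_{S\in\mathcal{S}_\eta}\Phi(S)$. The point is that raising $y_\eta$ continuously increases $y_S$ at unit rate for every $S\in\mathcal{S}_\eta$ simultaneously, so the first set to reach its cap $y_S = c_S$ is precisely the one of smallest current slack; when several reach the cap together the rule ``smallest indexed set at cap'' selects the lexicographically smallest key. Hence whenever there is a preference edge $(S,T)$ — witnessed by some $\eta\in S\cap T$ with $\eta\rightarrow_\mathcal{A} T$ — the set $T$ was chosen over $S$ while both lay in $\mathcal{S}_\eta$, forcing $\Phi(T) < \Phi(S)$. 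A directed cycle $S_1\rightarrow\cdots\rightarrow S_k\rightarrow S_1$ would then give $\Phi(S_1) > \cdots > \Phi(S_k) > \Phi(S_1)$, a contradiction, so the preference graph is acyclic. Since this holds at every snapshot, the algorithm is monotone, and by the equivalence of monotonicity and priceability it is priceable.

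The step I expect to be the main obstacle is pinning down the assignment rule so that the key is genuinely monotone across \emph{every} edge. Two points need care: the index tie-break must be folded into the potential exactly as the algorithm breaks ties, and one must confirm that sets already driven to their cap by previously processed elements are handled consistently — an uncovered $\eta$ whose family already contains a capped set is assigned the smallest-indexed such set, which is still the $\Phi$-minimizer. I would also observe that already-covered elements trigger no purchase and hence contribute no purchase-relevant preference, so only the uncovered case is needed. Granting these, the unit-rate increase claim and the reduction to acyclicity are routine.
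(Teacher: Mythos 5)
Your proof is correct and takes essentially the same route as the paper: your lexicographic potential $\Phi(S) = (c_S - y_S,\ \mathrm{idx}(S))$ is exactly the paper's relation $\lessdot$ on the slacks $\delta(S) = c_S - y_S$ with index tie-breaking, and both arguments conclude acyclicity because a directed cycle would force a strict decrease of this key around a loop. Your additional care (the unit-rate justification that the algorithm picks the $\Phi$-minimizer, and the handling of already-capped sets) only makes explicit what the paper asserts in passing.
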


\begin{proof}
Assume for the sake of contradiction that a preference graph constructed from the primal-dual algorithm contains a cycle. Then  there exists a sequence of sets $(S_0, S_1, \dots, S_n)$ such that for each $0\leq i\leq n$ there exists an $\eta_i\in S_i\cap S_{i+1}$\footnote{Letting $S_{n+1} = S_0$} with $\eta_i\rightarrow S_{i+1}$.

In the primal-dual algorithm, elements are assigned to the set that offers the smallest difference between the cost and the dual sum and breaks ties according to the ordering picked in line 3: that is, an element $\eta$ will be matched to the set $S$ that minimizes $c_S - y_S$ (prior to $\eta$'s arrival). Let $ \delta(S_i) = c_{S_i} - y_{S_i}$ prior to the next arrival.

Let $S_j\lessdot S_k$ denote that $S_k$ is picked over $S_j$ by elements in their intersection, i.e. either (a) $\delta(S_k)<\delta(S_j)$ or (b) $\delta(S_k) = \delta(S_j)$ and $k < j$. Since there is a cycle in the preference graph, this implies that:
\[ \delta(S_n) \lessdot \delta(S_0) \lessdot \delta(S_1) \lessdot \ldots \lessdot \delta(S_{n-1}) \lessdot \delta(S_n) \]

This sequence implies that $S_n\lessdot S_n$, which is a contradiction. Thus, the primal-dual algorithm must always create acyclic preference graphs, satisfying the conditions necessary for priceability.
\end{proof}

The following lowerbound comes from \cite{alon2003}, who used it to obtain a lowerbound of $\Omega\left(\frac{\log n \log m}{\log\log n + \log\log m}\right)$ for most reasonable choices inputs to the OSC problem. For our purposes, it gives us a linear lowerbound for deterministic algorithms with respect to the frequency parameter of the input.

\begin{lemma}[Reproduced from \cite{alon2003}]
    No (deterministic) algorithm for the online set cover problem is better than $f$-competitive.
\end{lemma}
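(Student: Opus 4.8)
The plan is to fix an arbitrary deterministic algorithm $\textsc{Alg}$ and, knowing its behavior, adaptively construct a request sequence together with the set system on which it runs. All sets will have unit cost, the universe will consist of exactly $f$ elements $e_1,\dots,e_f$ presented in this order, and each element will be placed in \emph{exactly} $f$ sets, so the instance has frequency exactly $f$. The goal is to force $\textsc{Alg}$ to purchase a brand-new set for each of the $f$ elements while guaranteeing that a single set covers all of $e_1,\dots,e_f$; this yields $\textsc{Alg}=f$ and $\textsc{Opt}=1$, hence a ratio of $f$ against this particular algorithm. Since $\textsc{Alg}$ is arbitrary, no deterministic algorithm can beat $f$.

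The main tension, and the heart of the argument, is that forcing a fresh purchase for $e_i$ requires all $f$ sets of $\mathcal{S}_{e_i}$ to be currently unpurchased (so that $e_i$ arrives uncovered), whereas keeping $\textsc{Opt}=1$ requires one common set to lie in every $\mathcal{S}_{e_i}$. I would reconcile these with an ``alive candidate'' invariant. Maintain the collection $A_i$ of sets that have been offered in all of rounds $1,\dots,i$ and never purchased. In round $i$ I present $e_i$ with $\mathcal{S}_{e_i}$ consisting of all of $A_{i-1}$ together with enough otherwise-unused sets to reach size $f$; since every offered set is unpurchased and $e_i$ is new, $e_i$ arrives uncovered and $\textsc{Alg}$ must buy one of them, a new purchase. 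A purchase removes at most one set from the alive collection, so $|A_i|\ge |A_{i-1}|-1$; starting from $|A_1|=f-1$ this gives $|A_{f-1}|\ge 1$. Taking the pool of available sets large (size $O(f^2)$ suffices) makes completing each offer to size exactly $f$ trivial, and placing $e_i$ only in the sets of $\mathcal{S}_{e_i}$ keeps every frequency equal to $f$.

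To finish, after round $f-1$ pick any surviving set $O\in A_{f-1}$; by definition $O$ is unpurchased and contains $e_1,\dots,e_{f-1}$. For the last element I present $e_f$ with $\mathcal{S}_{e_f}=\{O\}\cup\{f-1 \text{ fresh unpurchased sets}\}$, again forcing a purchase. If $\textsc{Alg}$ buys $O$, it has now bought $f$ sets in total; if it buys anything else, then $O$ survives and still covers $e_1,\dots,e_f$. In either case $\textsc{Alg}$ makes $f$ purchases while $\{O\}$ is a cover of cost $1$, so the ratio is exactly $f$. I expect the delicate points to be the bookkeeping showing the alive collection shrinks by at most one per round (so a common cheap cover persists to the finale) and the case analysis of the last step; the frequency bound and the availability of enough unused sets are routine once the pool is taken large enough.
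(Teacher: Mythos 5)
Your adversarial engine---force a fresh purchase in every round while keeping alive one unpurchased set that contains every requested element---is exactly the idea behind the paper's lower bound, but your implementation has a genuine gap: you construct the \emph{set system itself} adaptively, as a function of the algorithm's purchases. In the model of this paper (and of the source of this lemma), the universe $X$ and the collection $\mathcal{S}$ with its costs are given to the algorithm \emph{before} any request arrives; only the request sequence is online. A deterministic algorithm is a function of the whole instance $(X,\mathcal{S})$ together with the request prefix, so ``knowing its behavior'' in round $i$ is not well-defined while the memberships of the sets are still being decided---the construction is circular, and as written it cannot even be executed against a standard algorithm. What your argument does establish is a lower bound in a different game, where set memberships are revealed or extended online; since algorithms in that game are strictly less informed, such a bound does not imply the lemma as stated.

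The gap is fixable in two ways. One is to de-adaptify your construction by pre-committing to the full ``play tree'': one element for every possible purchase history, placed in exactly the sets your adversary would offer at that history. Sets then accumulate elements across incomparable branches, but no purchased set ever contains a later request on the actual branch, so the forcing step and the surviving-set argument still go through (at the cost of exponentially many elements and sets, which is harmless since the bound depends only on $f$). The cleaner route is the paper's: observe that frequency is a \emph{maximum} over elements, so your insistence that every element lie in exactly $f$ sets---the very requirement that forces you to pad with fresh sets and hence to adapt the set system---is unnecessary. Fix $k=f$ sets $S_1,\dots,S_k$ and let the universe be all nonzero $k$-bit strings, where an element belongs to $S_i$ if and only if its $i$-th bit is $1$; this instance is fixed up front. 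The adversary then requests, at each step, the indicator vector of the currently unpurchased sets. This is precisely your ``offer exactly the alive sets'' move realized without any fresh sets: it forces $k$ distinct purchases, while the last set the algorithm buys covers every element that was requested, so $\textsc{Opt}=1$ and the ratio is $f$.
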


\begin{proof}
    Let $\mathcal{X}$ be the binary numbers that count from 1 to $2^k-1$ for some $k$. Let $\mathcal{S}$ contain the sets $\{S_1,...,S_i,...,S_k\}$, where $S_i$ covers all the binary numbers whose $i^{th}$ least significant bit is 1. 

    The adversary first introduces the number $2^k-1$, where every bit in the number is 1. If the algorithm chooses the set $S_i$, the adversary introduces the number where all bits are 1 except for $S_i$. If this process repeats, the adversary can force the algorithm to choose all $k$ sets, while the optimal solution will be the last set $S_j$ that the algorithm choose, because all the binary numbers introduced had a 1 in the $j^{th}$ bit. Since the frequency of the instance is $k$, $2^k -1$ belongs to every set, and no deterministic algorithm can do better than $k$ times worse than \textsc{Opt}, no deterministic algorithm can be better than $f$-competitive.
\end{proof}

\begin{theorem}
    There is an $f$-competitive dynamic pricing algorithm for the online set cover problem and this is optimal for deterministic algorithms.
\end{theorem}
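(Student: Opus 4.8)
The plan is to assemble the pieces already established into the two halves of the statement: the existence of an $f$-competitive dynamic pricing algorithm (the upper bound) and its optimality among deterministic algorithms (the lower bound).

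For the upper bound, I would take the primal-dual algorithm of \cite{buchbinder2009} as the underlying online algorithm. Two facts about it are already in hand: it is $f$-competitive (cited from \cite{buchbinder2009}) and it is monotone (by the preceding lemma). By the equivalence theorem (priceable iff monotone), monotonicity guarantees priceability, and \textsc{PathPrice} explicitly produces a surcharge function $\rho$ that mimics a given scheme. The point requiring care is that \textsc{PathPrice} mimics a single \emph{assignment scheme} --- a snapshot of the algorithm's choices at one fixed moment --- whereas we must mimic the algorithm over the entire request sequence. The dynamic pricing model permits re-pricing in response to each \emph{purchase} (just not in response to a request), so the dynamic pricing algorithm I would define maintains the internal state of the primal-dual algorithm and, after every purchase, recomputes the current assignment scheme and re-runs \textsc{PathPrice} to reset the prices. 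By the mimicking lemma, at each step the greedy client selects exactly the set the primal-dual algorithm would have assigned, so by induction on the request sequence the two algorithms purchase precisely the same sets.

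It then remains to observe that purchasing the same sets yields the same objective value: the objective counts only the true costs $c_S$ of purchased sets --- surcharges are a steering device and not part of the cost --- so the dynamic pricing solution and the primal-dual solution have identical cost on every input. Consequently the dynamic pricing algorithm inherits the $f$-competitiveness of \cite{buchbinder2009}. For the lower bound, I would invoke the \cite{alon2003} lemma, which states that no deterministic online set cover algorithm is better than $f$-competitive. A dynamic pricing algorithm, together with greedy clients breaking ties deterministically, induces a deterministic online set cover algorithm; indeed, since every pricing scheme realizes only monotone assignments (by the earlier lemma), the deterministic dynamic pricing algorithms form a subclass of all deterministic algorithms. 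The \cite{alon2003} bound therefore applies verbatim, so no deterministic dynamic pricing algorithm can beat $f$, and combined with the upper bound the constructed algorithm is optimal for deterministic algorithms.

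The main obstacle is the first half: ensuring that mimicking a single static assignment scheme lifts to faithfully reproducing the algorithm's behavior across the whole online sequence. This hinges on exploiting the model's allowance of re-pricing after purchases and on verifying that the induced per-step choices compose correctly into an identical set of purchases, from which the equality of objective values --- and hence the transfer of the competitive ratio --- follows.
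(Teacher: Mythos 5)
Your proposal is correct and follows essentially the same route as the paper: the upper bound comes from the Primal-Dual by Frequency algorithm ($f$-competitive by \cite{buchbinder2009}, monotone by the preceding lemma, hence priceable via \textsc{PathPrice}), and the lower bound comes from the \cite{alon2003} lemma applied to the deterministic algorithm induced by prices plus greedy clients. Your treatment is in fact more careful than the paper's terse proof --- in particular, you make explicit the re-pricing after each purchase and the fact that surcharges do not count toward the objective, both of which the paper leaves implicit.
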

\begin{proof}
The Primal-Dual by frequency algorithm described in definition 7 is a $\Theta(f)$-competitive algorithm for the OSC problem. Lemma 5 tells us that this algorithm is monotone and can therefore be priced. Applying $\textsc{PathPrice}$ to the Primal-Dual algorithm results in a pricing scheme that is $f$-competitive. Lemma 6 tells us that no deterministic algorithm for the OSC problem can be better than $f$-competitive, so this pricing scheme that results from this algorithm is optimal. 
\end{proof}

\section{Conclusion and Further Work}
In this paper we've identified a particular deterministic algorithm that is strongly competitive with respect to deterministic algorithms for the parameter of $f$. In this section, we'll address a few of the questions that we are most interested in seeing resolved. 

As we have identified the optimal answer for deterministic dynamic pricing algorithms, a natural follow-up would be to consider randomized algorithms. The lowerbound from Lemma 6 induces a $\log f$ lowerbound when applying standard techniques against randomized algorithms, however no algorithm is currently known with a $\log f$ competitive ratio for the OSCP. 

In a related sense, it would also be interesting to see competitive dynamic pricing algorithms with respect to other parameters. With respect to the parameters of $n = |X|$ and $m = |\mathcal{S}|$, \cite{alon2003} developed an $O(\log n \log m)$ competitive algorithm, which can somewhat be considered an adaptation of the algorithm defined in Definition 6. This algorithm is both deterministic and nearly optimal, up to a $O(\log\log m + \log\log n)$ factor. However, this algorithm (and many of the other derivative algorithms defined in \cite{buchbinder2009}) are not monotone and therefore not mimickable by any pricing scheme. Candidly, we'll remark that the biggest hurdle in mimicking this algorithm via prices might appear to be the `reset's induced by the standard online guess-and-double technique the algorithm employs, which one might suspect to be navigable using techniques found in \cite{arndt2023}. However, even if supplied with the cost of the optimal solution ahead of time (and thereby subverting the need to use guess-and-double) the algorithm is still not monotone. Therefore it seems new techniques may be necessary to address these parameters. 

Lastly, one of the strengths of the dynamic pricing framework is its lack of necessity for communication between the client and server: in many of the dynamic pricing algorithms presented historically (see section 5 of \cite{cohen2015}, e.g.), the server setting the prices need not interact with the client at all, and need only see the removal of a resource it maintains. Note that this is not the case for the algorithm defined in Definition 6: it needs to know precisely which element $\eta$ is requested in order to increase its relevant variable $y_\eta$. To the best of our knowledge, the lack of this information (the `identity' of the client) has yet to be shown to make any problem definitively harder in any context for dynamic pricing algorithms, and it would be interesting to see if such problems exist. 

%
%
%
\bibliographystyle{splncs04}
\bibliography{mybibliography}
\end{document}